\documentclass[11pt,a4paper]{article}

\usepackage{amsthm,amsmath,amsfonts,amssymb}
\usepackage{hyperref}
\usepackage{cleveref}
\usepackage{color,xcolor}

\usepackage{amssymb}
\usepackage{amsmath}
\usepackage{graphicx}
\usepackage{epsfig}
\usepackage{xspace}
\usepackage{pdfpages}
\usepackage{algorithm}
\usepackage{algorithmic}

\setlength{\textwidth}{6.2in}
\setlength{\textheight}{8.6in}
\setlength{\evensidemargin}{0in}
\setlength{\oddsidemargin}{0in}
\setlength{\headheight}{0.3in}

\setlength{\headsep}{0in}

%
%

\pagestyle{plain}

%
%

\newcommand{\comment}[1]{}

\newcommand{\A}{{\mathcal A}}
\newcommand{\G}{{\mathcal G}}

\newcommand{\C}{{\cal C}}

\newcommand{\E}{{\cal E}}
\newcommand{\J}{{\cal J}}


\newcommand{\tuple}[1]{\langle #1  \rangle}

\newtheorem{theorem}{Theorem}[section]
\newtheorem{proposition}[theorem]{Proposition}

\newtheorem{claim}[theorem]{Claim}

\newtheorem{example}{Example}[section]

                      {}


\newcommand{\ceil}[1]{\left\lceil #1 \right\rceil}

\newcommand{\floor}[1]{\left\lfloor #1 \right\rfloor}

\begin{document}

\title{
Interval Scheduling Games with Color-Based Concurrent Jobs}

\author{Vipin Ravindran Vijayalakshmi\thanks{Viavi Solutions, Berlin, Germany, E-mail: vipin.rv@oms.rwth-aachen.de.} \and Marc Schr{\"o}der \thanks{School of Business and Economics, Maastricht University, Netherlands, E-mail: m.schroder@maastrichtuniversity.nl} \and
Tami Tamir\thanks{School of Computer Science, Reichman University, Israel. E-mail:tami@runi.ac.il.}}
\date{}
\maketitle

\begin{abstract}
  We consider a game-theoretic variant of an interval scheduling problem. Every job is associated with a length, a weight, and a color. Each player controls all the jobs of a specific color, and needs to decide on a processing interval for each of its jobs. Jobs of the same color can be processed simultaneously by the machine. A job is covered if the machine is configured to its color during its whole processing interval. The goal of the machine is to maximize the sum of weights of all covered jobs, and the goal of each player is to place its jobs such that the sum of weights of covered jobs from its color is maximized. The study of this game is motivated by several applications like antenna scheduling for wireless networks.
    
    We first show that given a strategy profile of the players, the machine scheduling problem can be solved in polynomial time. We then study the game from the players' point of view. We analyze the existence of Nash equilibria, its computation, and inefficiency. We distinguish between instances of the classical interval scheduling problem, in which every player controls a single job, and instances in which color sets may include multiple jobs.
\vspace{0.2cm}

    {\bf keywords:} Interval scheduling, Scheduling games,
Equilibrium inefficiency.
\end{abstract}

\maketitle

\section{Introduction}
Scheduling problems and game-theory are a fruitful and well studied combination. The machine scheduling problem that we consider in this paper is motivated by the beam selection problem on a base station \cite{kose2021profiling,li2020context}. Antennas in modern 5G base stations are designed to dynamically steer their beams to focus transmission signals towards specific users using a technique known as beam forming. Beam forming maximizes data throughput between the base station and the user equipment while minimizing the radio interference. Base stations serve multiple users by allocating time intervals during which it directs its beam toward each user. However, base station antennas often suffer the restriction that only a subset of beams can be activated at any given point in time \cite{kose2021profiling,li2020context}. The resource scheduler in a base station must choose these time intervals effectively to maximize overall network performance, especially in environments with multiple users competing for access to the radio resource, e.g., at a football stadium, by deciding when and for how long the radio resources are scheduled for a user. Beam forming techniques have become an integral part of 5G wireless communication and this gives rise to the optimization problem of deciding which beam is active at which point in time, having implications for the set of users that can be serviced.

We model this scheduling problem by means of a generalization of the classic interval scheduling problem \cite{kleinberg2006algorithm}. In the classic interval scheduling problem, we are given a machine and processing intervals for jobs so that the machine has to decide which jobs are rejected and which jobs are processed, subject to no two processed jobs having overlapping processing intervals. In our generalization, at each point in time the machine has to be configured to a certain color, a color can be thought of as a single beam direction in the beam selection problem, so that all jobs of this color can be serviced simultaneously. A job is then completed if the machine is configured to that color during all its processing interval. 




We are interested in a game-theoretic variant of the above interval scheduling problem. Assuming players know how the machine solves the beam selection problem, players compete for access to the radio resource. That is, players schedule their requests by selecting a time interval such that their requests can be processed. We assume that each player controls jobs of one color. The strategy of a player is to decide on a processing interval for each job of its color. The player's goal is to assign the jobs such that the machine will complete as many jobs of its color as possible. 

We consider three different problems related to the above description. First, we solve the machine scheduling problem that decides which jobs are covered given the processing interval of each job. Second, we study the problem of assigning jobs to processing intervals so that the sum of weights of all covered jobs is maximized. Third, we provide answers to basic  problems in the analysis of the corresponding game. Specifically, we study the existence and computation of a pure Nash equilibrium, computation and convergence of best-response dynamics, and the equilibrium inefficiency.
%

\section{Model}

An interval scheduling game with color-based concurrent jobs is given by the following tuple $\tuple{\J,\C, T, (p_j)_{j \in \J},(w_j)_{j \in \J}}$, containing a set of jobs $\J=\{1,\ldots,n\}$, a set of colors $\C$ with $|\C|=c$ and a time interval $[0,T)$, where every job $j\in \J$ has a color $c_j\in \C$, a length $0\leq p_j \le T$ and a weight $w_j\geq 0$.

 For $i\in \C$, denote by $J_i$ the jobs having color $i$. We assume that all the jobs of $J_i$ are controlled by one player. 
 A strategy, $\sigma_i$, of  player $i$ assigns a processing interval $[s_j,f_j)\subseteq [0,T)$ with $f_j-s_j=p_j$ to each job $j$ with $c_j=i$. A  profile $\sigma=(\sigma_i)_{i\in \C}$ assigns a strategy to each player. 

Given a strategy profile $\sigma$, the machine faces the following scheduling problem. We say that job $j$ is {\em covered} in a given schedule if the machine is configured to process color $c_j$ during $[s_j,f_j)$. Note that if the machine is configured to a specific color, the machine can service an unlimited number of jobs from that color simultaneously. Let $\A \subseteq \J$ be the set of covered jobs. The goal of the machine is to maximize $\sum_{j \in \A} w_j$. Formally, the output for the machine scheduling problem is a configuration for the machine, i.e., a partition of the interval $[0,T)$ to intervals $\{[0,t_1),\ldots,[t_{m-1},t_{m}=T)\}$ and a mapping $\gamma:\{1,\ldots,m\} \rightarrow \C$, such that for every $1 \le k \le m$, the machine is configured to process jobs of color $\gamma(k)$ during $[t_{k-1}, t_k)$.

Given strategy profile $\sigma$, the machine solves its optimization problem. Thus, determining for each job whether it is covered or not. The utility of each player $i\in \C$, denoted by $u_i(\sigma)$, is given by the sum of weights of covered jobs $j$ with $c_j=i$. Given $i\in \C$ and $\sigma_{-i}=(\sigma_j)_{j\neq i}$, we say a strategy $\sigma_i$ is a best-response for player $i$ if $u_i(\sigma_i,\sigma_{-i})\geq u_i(\sigma'_i,\sigma_{-i})$ for all $\sigma'_i$. A strategy profile $\sigma$ is called a \emph{Nash equilibrium} (NE) if for all $i\in \C$,
$u_i(\sigma_i,\sigma_{-i})\geq u_i(\sigma'_i,\sigma_{-i})\text{ for all } \sigma'_i.$

Best-response dynamics (BRD) is a local-search method where in each iteration some player is chosen and plays its best strategy given the strategies of the other players. When applied to our game, every iteration consists of two steps. First, an arbitrary player $i\in \C$ is chosen and may modify the location of the intervals of $J_i$. Then, as a response, the machine may modify its configuration along $[0,T)$. The dynamics then proceed to the next iteration, until no player has a beneficial deviation. Note that BRD need not always stop. 

\begin{example}\label{ex:inexis}
Consider the following game with $T=4$ and $c=2$. Player $1$ controls the set $J_1$, which includes two jobs, where $p_1=4$ and $p_2=1$, both having weight $2$. Player $2$ controls the set $J_2$, which includes a single job of length $p_3=1$ and weight $3$.

\begin{figure}[ht]
\centering
\includegraphics[width=0.7\textwidth]{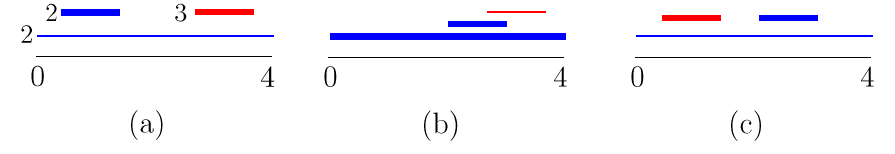}
\caption{\normalfont{
A game $G$ with $c=2$ that has no NE. Intervals are labeled by their weights. Bold intervals are covered by the machine.}}
\label{fig:NoNE2}
\end{figure}

If the two unit jobs are placed in disjoint intervals (See Fig.~\ref{fig:NoNE2}(a)) then the machine will process them. The utilities of the players are $(2,3)$. If the two unit jobs are placed in overlapping intervals (Fig.~\ref{fig:NoNE2}(b)) then the machine will cover only jobs of $J_1$. The utilities of the players are $(4,0)$. Therefore, the game has no NE: player $2$ will place its unit job such that it does not overlap with the unit-job of player $1$, and player $1$ will place its unit-job such that it overlaps with job $3$. Since $T=4$,  player $2$ always has a valid move (Fig.~\ref{fig:NoNE2}(c)). 
\end{example}

Denote the set of Nash equilibria of a given game $G$ by $\E$. Define the profit of strategy profile $\sigma$ by $\mathrm{val}(\sigma)=\sum_{i\in C}u_i(\sigma)$.
We identify classes for which an NE is guaranteed to exist, provide algorithms for computing an NE, study the convergence of best-response dynamics, and analyze the inefficiency of Nash equilibria.  To this end, a strategy profile $\sigma$ that maximizes $\mathrm{val}(\sigma)$ is denoted a social optimum solution (OPT) and  its profit is denoted by $\mathrm{val}(\mathrm{OPT})$. The \emph{price of anarchy} ($\mathrm{PoA}$) and the \emph{price of stability} $(\mathrm{PoS})$ of a game $G$ are defined as follows.
$$\mathrm{PoA}(G)=\frac{\mathrm{val}(\mathrm{OPT})}{\min_{\sigma\in \E}\mathrm{val}(\sigma)}~~ {\mbox and} ~~ 
 \mathrm{PoS}(G)=\frac{\mathrm{val}(\mathrm{OPT})}{\max_{\sigma\in \E}\mathrm{val}(\sigma)}.$$


Some of our results refer to restricted classes of games, specifically
\begin{enumerate}
  \item $\G_{single}$ - games corresponding to the classic interval scheduling problem -- in which there is exactly one job from each color, that is, $n=c$. 
    \item $\G_{unit}$ - games with unit-length jobs, that is, for every job $j \in \J$ it holds that $p_j=1$. 
    \item $\G_{prop}$ - games in which the weight of a job is proportional to its length, that is, for every job $j \in \J$ it holds that $w_j=p_j$. 
\end{enumerate}

For a class of games, $\mathcal{G}$, 
the {\em price of anarchy} of $\G$ is 
$\mbox{PoA}(\mathcal{G}) = sup_{ G\in \mathcal{G}}\mbox{PoA}(G)$,
and the {\em price of stability} of $\G$ is 
$\mbox{PoS}(\mathcal{G}) = sup_{ G\in \mathcal{G}}\mbox{PoS}(G)$.

\subsection{Our Results}
Our first result, Theorem \ref{thm:mach}, shows that given a strategy profile of the players, the machine scheduling problem can be solved in polynomial time using dynamic programming.

We then proceed to show (Theorem \ref{th:SOconst}), that computing a socially optimal solution for the players can be done in polynomial time if the number of colors is constant, but (Theorem \ref{th:SOhard}) it is weakly NP-hard when the number of colors is part of the input. We then present (Theorem \ref{th:SOPP}) a pseudo-polynomial algorithm for this task whose time complexity is $O(ncW)$ for $W=\sum_i w_i$ and integral weights. 

We next analyze Nash equilibria of our game. We show in Example \ref{ex:inexis} that in general, even with two players, a (pure) Nash equilibrium need not exist. However, for the game-theoretic variant of the classic interval scheduling problem, where every player controls a single job, that is, the class $\G_{single}$, we prove in Theorem \ref{th:exist} that a (pure) Nash equilibrium always exists and the price of stability is $1$. For this class of games, we present in Theorem \ref{thm:NE_single_exists} an algorithm for computing an NE. Then, Theorem \ref{th:poa} provides a complete and tight analysis of the inefficiency of Nash equilibria. For $n\le 5$, we show that the price of anarchy is at most $2$, and for $n \ge 6$, we show that the price of anarchy is at most $(n-1)/2$. Moreover, these bounds are tight. We provide a significantly lower upper bound of $3$ in Theorem \ref{th:w=p} for the restricted setting of the classic interval scheduling problems with processing times proportional to the weights.

For general games, we prove in Theorem \ref{thm:SNPHdecide} and Theorem \ref{thm:BRD2Shard} that even with only two players, computing a best-response of a player, as well as deciding whether an instance has a Nash equilibrium are both strongly NP-hard. In Theorem \ref{thm:poapos} we present an upper bound of $c$ for the price of anarchy, and show that even a best NE can have poor quality, by proving that the price of stability of this class is $2$. 

A second class of interval scheduling games with a guaranteed Nash equilibrium are games with unit processing times. For this class, the price of anarchy is significantly lower. Specifically, in Theorem \ref{thm:unitPoA} we provide a tight bound of $\min\{3-2/c, 3- 2/\floor{T}\}$.

Lastly, we define a natural extension of the game, in which jobs may be associated with different release times and due dates, and show in Theorem \ref{thm:noNEnonsymm} that even instances with unit processing time might lack equilibria.

\subsection{Related Literature}
The machine scheduling problem we study is a generalization of the classical interval scheduling/activity selection problem. This problem can be solved greedily for unweighted jobs \cite{carlisle1995k,faigle1995note}, and by dynamic programming \cite{kleinberg2006algorithm} or min-cost flow computation \cite{arkin1987scheduling,bouzina1996interval} for weighted jobs. For an overview of different variants of interval scheduling, we refer to \cite{kolen2007interval,kovalyov2007fixed}. Our analysis shows a close connection to the famous knapsack problem --- one of Karp's NP-complete problems \cite{Kar72}.

There are many game-theoretic models of job scheduling problems. The majority of this literature refer to games in which each
job selfishly chooses a machine so as to minimize a certain objective. Different authors make various assumptions about the machines' scheduling policy, the cost function of the players, and the social welfare function to be considered, like the makespan of the schedule or the sum of completion times. This line of research was initiated by \cite{Koutsoupias:1999:WE:1764891.1764944} and later extended by, for example, \cite{bilo2020congestion,cole2015decentralized,correa2012efficiency,czumaj2007tight,durr2009non,vijayalakshmi2021scheduling}.

{\em Real-time scheduling}, refers to a scheduling environment in which  jobs are associated with intervals during which they have to be processed. There is wide literature on real-time scheduling, either on a single or on parallel machines (see surveys in \cite{CGK14,IP05}). Most of the existing work consider systems controlled by an external authority determining the jobs' assignment. When the server has a limited capacity, and jobs have variable-weights, many problems such as minimizing the number of late jobs, or minimizing the servers' busy time are NP-hard, even with unit-length jobs \cite{CE+11,Alb10}. On the other hand, with unit-weight unit-length jobs, these problems are polynomially solvable \cite{Bap00,CGK14}.  
Real-time scheduling has been studied as a cost-sharing game, in which jobs that are processed in the same interval, share the machine's activation cost in this interval~\cite{Tamir22,GKT21}. In the above games, the machine may process multiple jobs simultaneously and the congestion on the machine during its processed interval determines the job's cost. The paper \cite{herzel2019multistage} considers a strategic variant of the multistage interval scheduling problem, in which jobs consist of several tasks, and all tasks have to be scheduled in order to obtain the profit associated with the job.

Another related game is \emph{selfish bin packing}. The goal in this line of research is to find a cost-sharing rule so that number of bins used by selfish jobs is as close as possible to the optimal number of bins \cite{bilo2006packing,wang2022best,epstein2011selfish,yu2008bin}.
A different setting in which players correspond to jobs who need to be processed by a machine with a limited capacity arises in {\em knapsack auctions}, where each job suggests a payment for being processed \cite{MN08,BKV11}.  We note that in the above packing games, a profile does not specify the location of a packed item in the knapsack, unlike our game, in which the specific interval in which a job is processed within the interval $[0,T)$ plays a crucial role.

Interval scheduling games with single job classes are closely related to \emph{claim games} for the estate division problem (see \cite{o1982problem,atlamaz2011non,peters2019claim}). In claim games, players claim parts of the estate by splitting their claim over the estate and subintervals that are claimed by multiple players are then divided among those players. Our scheduling game differs from that setting in that each job can only be scheduled consecutively and jobs have to be completely covered.

\section{Optimal Algorithm for the Machine Scheduling Problem}\label{sec:machine}

In this section we show that we can compute an optimal configuration for the machine in polynomial time. 
The input for the problem is a placement of the jobs in $[0,T)$, and the output for the machine scheduling problem is a configuration for the machine. The objective is to maximize the weight of covered jobs.

Observe first that if $c=n$, that is, every player controls a single job, then the machine's scheduling problem is the classical interval scheduling/activity selection problem~\cite{kleinberg2006algorithm}.

We present an optimal solution for the most general instance of arbitrary length, arbitrary weight jobs, and arbitrary number of colors. The following notations are used in our solution. We assume that the jobs are indexed by finish time, that is, $f_1 \le \ldots \le f_n$.

\begin{itemize}
\item
For a set of jobs $J \subseteq \J$, let $w(J):= \sum_{j \in J} w_j$ be the total weight of jobs in $J$.
 \item
For every job $j \in \J$, let $prev(j):=\max\{k\in\J\mid f_k < s_j\}$, that is, $prev(j)$ is the index of the last job to end before job $j$.
If there is no such job, then define $prev(j)=0$.
\item
For every job $j \in \J$, let $In(j):=\{k\in\J\mid [s_k,f_k) \subseteq [s_j,f_j) \mbox{ and } c_k=c_j\}$, that is, $In(j)$ is the set of jobs that have the same color as job $j$ and whose interval is fully included in the interval of job $j$. In particular, $j \in In(j)$.
\end{itemize}


\begin{theorem}\label{thm:mach}
 There is a dynamic program that, for each strategy profile $\sigma$, solves the machine scheduling problem for $\sigma$ in polynomial time. 
\end{theorem}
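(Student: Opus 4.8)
The plan is to set up a dynamic program over jobs indexed by finish time $f_1 \le \dots \le f_n$, in the spirit of the classical weighted interval scheduling DP, but augmented to handle the fact that when the machine commits to a color on an interval, it simultaneously covers every job of that color nested inside. The state I would use is $A(j)$ = the maximum total weight of covered jobs, among jobs $1,\dots,j$, in an optimal configuration of $[0, f_j)$ in which the machine is configured to color $c_j$ exactly over $[s_j, f_j)$ (so job $j$ itself is covered), together with a companion value $B(j)$ = the maximum total weight of covered jobs among $1,\dots,j$ over $[0, f_j)$ with no constraint on the tail. The answer is $\max_j B(n)$, or one can add a sentinel job of length $0$, weight $0$ at time $T$.

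The key recurrences are as follows. For $B(j)$, either job $j$ is not the last covered job to finish, giving $B(j) = B(j-1)$ (we may always shrink/relocate a trailing configuration block without losing covered jobs among $1,\dots,j-1$ — this monotonicity needs a short argument), or job $j$ is covered and is among the latest-finishing covered jobs, giving $B(j) = A(j)$; so $B(j) = \max\{B(j-1), A(j)\}$. For $A(j)$: since the machine is configured to $c_j$ on $[s_j,f_j)$, the jobs of $In(j)$ are all automatically covered, contributing $w(In(j))$; then the segment $[0, s_j)$ is solved independently, and the block of color $c_j$ may actually extend to the left of $s_j$, covering more jobs of color $c_j$. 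To capture this cleanly I would let the DP also range over the left endpoint of the maximal $c_j$-block: define $A(j)$ via $A(j) = w(In(j)) + \max_{k} \big( \text{weight of color-}c_j\text{ jobs fully inside }[\,\cdot\,,s_j) \text{ ending after } f_k \big) + B(k)$, where $k$ ranges over jobs ending before the chosen left endpoint. A cleaner bookkeeping: precompute, for each pair of jobs $(a,b)$ with $f_a \le s_b$ and $c_a$-block reasoning, the total weight of jobs of color $c_b$ whose interval lies in $[s_a', s_b)$; all such quantities are polynomial in number and computable by sorting endpoints.

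The steps I would carry out, in order: (1) argue the monotonicity lemma justifying $B(j) \ge B(j-1)$ and more generally that an optimal configuration can be assumed to consist of blocks each of which is exactly the convex hull of the covered jobs of one color that it serves — i.e., there is no loss in "tightening" blocks; (2) define the states and write the recurrences; (3) verify correctness by induction on $j$, using (1) to show any optimal configuration decomposes at the left endpoint of the last color-block into a solved prefix plus that block; (4) bound the running time — $O(n)$ states, each computed by a $\max$ over $O(n)$ predecessors, with $In(\cdot)$ and the cross-sums precomputed in $O(n^2)$ time after sorting the $2n$ endpoints, for a total of $O(n^2)$ or so.

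The main obstacle is step (1)/(3): unlike classical interval scheduling, a color-block covers a two-dimensional family of jobs (all nested same-color intervals), so the DP boundary is not simply "the previous compatible job" but "the left edge of the last block," and I must show that an optimal solution has a canonical form in which each block is the tight hull of the same-color jobs it covers — otherwise the DP's state does not capture enough information. Proving that tightening a block never destroys coverage of jobs served by other blocks (a block only shrinks, freeing time that other blocks can use) is the crux; once that structural normalization is in hand, the recurrences and the polynomial bound are routine.
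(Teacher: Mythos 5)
Your overall route --- normalize an optimal configuration so that every block is the tight hull of the same-color covered jobs it serves, decompose at the last block, and enumerate that block's left endpoint explicitly while handling the prefix with a second table $B$ --- is viable, and it differs from the paper mainly in that last point: the paper never enumerates the left endpoint, but instead chains through an earlier covered job $k$ of the same color via the term $\max_{k<j,\,c_k=c_j,\,k\notin In(j)} A[k]+w(In(j)\setminus In(k))$, letting the chain of same-color ``last covered jobs'' trace out the block implicitly. However, your recurrence for $A(j)$ as written has a concrete gap: the extra credit for extending the block left of $s_j$ is taken only over color-$c_j$ jobs \emph{fully inside} $[\ell,s_j)$ (and, in your ``cleaner bookkeeping'', over jobs whose interval lies in $[s_a',s_b)$). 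This misses same-color jobs that \emph{straddle} $s_j$, i.e.\ start before $s_j$ and finish in $(s_j,f_j]$; such a job is covered by the block $[\ell,f_j)$ but belongs neither to $In(j)$ nor to the set you add. Concretely, take two jobs of one color placed at $[0,6)$ and $[5,10)$, each of weight $1$, and a job of a second color at $[0,1)$ of weight $1/2$: the machine's optimum is $2$ (configure the first color on $[0,10)$), but your $A$-value for the job finishing at $10$ is $\max\{1,\ 1.5\}=1.5$, since $\ell=0$ credits no same-color job fully inside $[0,5)$ and an empty prefix, while $\ell=5$ only adds the prefix $1/2$. So the DP, taken literally, is sound but not complete, and it undercounts exactly on instances where same-color intervals overlap --- the case the paper's $Y$-term is built around.

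There is also a smaller inconsistency: you define $A(j)$ with the machine configured to $c_j$ ``exactly over $[s_j,f_j)$'', yet the recurrence lets the block extend left of $s_j$; the state should instead be ``job $j$ is the latest-finishing covered job'', as in the paper. Both issues are repairable within your scheme: let the extra term be the total weight of color-$c_j$ jobs whose interval is contained in $[\ell,f_j)$, with $\ell$ ranging over $s_j$ and the start times of color-$c_j$ jobs starting before $s_j$, and combine with $\max_{k:\,f_k\le\ell}A(k)$. With that correction, your structural lemma (tightening a block never uncovers jobs served by other blocks) does carry the induction in your step (3), and the running time remains polynomial, comparable to the paper's $O(n^2\log n)$.
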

\begin{proof}
 Define $A[j]$ to be the maximum profit from a solution in which job $j$ is the last covered job. In other words, $A[j]$ represents the maximum profit from an instance that only includes jobs $1,\ldots,j$ when job $j$ is covered.

The base case is $A[0]=0$. 
The recursive  formula for $j\geq 1$ is 
$A[j]= \max\{X, Y\}$, where
    $$X = \max_{k \le prev(j)} A[k] + w(In(j))$$ and
$$Y=  \max_ {k<j \mbox{ and }c_k = c_j \mbox{ and } k \not \in In(j) }  A[k]  + w(In(j) \setminus In(k)).$$

The value of $X$ is the maximum profit in case the last job covered before job $j$ ends before job $j$ starts. The value of $Y$ is the maximum profit in case job $j$ starts while the machine services earlier jobs of color $c_j$. In this case, we select the intersecting job from $j$'s color, for which the added profit from extending the interval in which the machine is configured to $c_j$ is maximal. Note that the case in which the machine is configured to $c_j$ and is then idle, is covered in both $X$ and $Y$.

Once the table $A$ is full, the value of an optimal solution is given by $\max_j A[j]$. The optimal schedule itself can be retrieved by backtracking. We assume that in case of multiple optimal solutions, the machine selects the subset of jobs that is lexicographically smallest.

The solution involves a pre-processing in which $prev(j)$, $In(j)$, $w(In(j))$ are computed for all $j$, as well as a computation of $In(j) \setminus In(k)$ and $w(In(j) \setminus In(k))$ for all $k<j$ and $c_k=c_j$. Since the jobs are sorted by finish time, all these values can be computed in total time $O(n^2\log n)$ using interval trees. Once these values are computed, computing $A[j]$ takes $O(n)$-time for a total of $O(n^2)$. Thus, the total time complexity of the algorithm is $O(n^2 \log n)$.
\end{proof}


\section{Computing a Social Optimum of a Game}
In this section we consider the optimization problem of finding a strategy profile that maximizes the sum of weights of covered jobs. Note that this task should not be confused with the problem of computing an optimal schedule for the machine (discussed in Section~\ref{sec:machine}). The problem considered below is the following: Given an instance $\tuple{\J,\C, T, (p_j)_{j \in \J},(w_j)_{j \in \J}}$ of the game, place the jobs in $[0,T)$, such that an optimal configuration of the machine produces maximal profit.
We first prove the following property.
\begin{claim}
\label{cl:single}
For all $G$, there exists a socially optimal solution in which the machine processes jobs of each color $i\in \C$ during at most one interval. 
\end{claim}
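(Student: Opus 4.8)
The plan is to start from an arbitrary socially optimal strategy profile $\sigma$ together with the corresponding optimal machine configuration, and to show that we can rearrange the jobs of each color so that every color is serviced in a single contiguous interval without decreasing the total profit. The natural way to organize this is to apply an exchange/normalization argument color by color: fix a color $i$, consider the set $C_i \subseteq S_i$ of covered jobs of color $i$ in $\sigma$, and argue that the time during which the machine is configured to $i$ can be merged into one interval.

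The key steps I would carry out are as follows. First, observe that if the machine is configured to color $i$ during several disjoint maximal intervals $I_1, \ldots, I_r$ (ordered left to right), then each covered job of color $i$ lies entirely within one of these intervals, since a covered job needs its whole processing interval to be on its color. Let $\ell = \sum_{t} |I_t|$ be the total length devoted to color $i$, and let $I$ be the rightmost such block. The idea is to slide all the covered jobs of color $i$ so that they are packed (possibly overlapping, that's fine since same-color jobs may run simultaneously) inside a single interval of length $\ell$; because $\max_j p_j \le T$ is not enough by itself, one must be slightly careful, but each individual covered job had length at most $|I_t| \le \ell$ for the block it sat in, so it fits in a length-$\ell$ window. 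Then I would place this consolidated length-$\ell$ window in the position of one of the original blocks (say, left-align it with $I_1$, or better, choose the placement so as not to disturb other colors — see the obstacle below), reconfigure the machine to color $i$ exactly on that window and to whatever it was before on the vacated blocks, and note that all previously covered jobs of color $i$ remain covered while no job of another color is newly lost. Iterating over all colors yields the claim.

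The main obstacle is handling the interaction between colors: when I relocate the jobs and the active interval of color $i$, I must ensure I do not uncover jobs of other colors that were relying on the machine being configured to their color during the time slots I am now repurposing. The clean way around this is to do the consolidation greedily in a fixed order and to always push the jobs of the current color into the slots that were already devoted to that color in $\sigma$ (so the "footprint" of color $i$ on the timeline only shrinks or stays the same as a set, never grows), which guarantees no other color's covered jobs are disturbed. One then checks that within color $i$'s own (now contiguous) footprint, all of $C_i$ can be simultaneously accommodated — which is immediate because same-color jobs may overlap and each has length at most the footprint length. A secondary, purely bookkeeping point is that moving a job changes $s_j, f_j$ and hence is a genuine change of strategy profile, but since we only care about the existence of *some* socially optimal profile, this is unproblematic; the profit is non-decreasing at every step and $\sigma$ was optimal, so equality holds throughout.
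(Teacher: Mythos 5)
Your overall strategy---consolidating each color's covered jobs inside time already devoted to that color, exploiting the fact that same-color jobs may run simultaneously---is a genuinely different route from the paper's proof, which instead takes two adjacent blocks of color $i$ separated by a stretch $[b,c)$ serving other colors and swaps that stretch with the second $i$-block (shifting the other colors' jobs from $[b,c)$ to $[b+d-c,d)$ along with their time), merging the two $i$-blocks and iterating. However, as written your key step has a concrete gap. A consolidated window of length $\ell=\sum_t|I_t|$ cannot be ``placed in the position of one of the original blocks'' without spilling onto time the machine devotes to other colors (when $r\ge 2$ we have $\ell>|I_t|$ for every $t$), so the assertion that no job of another color is newly lost fails for that construction; you acknowledge this, but the patch does not close it. If the footprint of color $i$ may only shrink as a set, then a contiguous new footprint must lie inside a single original block, and the relevant capacity is that block's length, not $\ell$. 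Your justification ``each job has length at most the footprint length'' does not show that every covered job of color $i$ fits into the particular block you land in: with blocks of lengths $5$ and $3$, a length-$5$ job does not fit if you consolidate into the shorter one, and the only target you actually name (``the rightmost block $I$'', or left-aligning with $I_1$) need not be the longest.

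The fix is one line and uses a bound you already half-state: every covered job of color $i$ lies entirely inside one maximal $i$-block, so $p_j\le|I_{t(j)}|\le\max_t|I_t|$. Choose the longest block $I_{t^*}$ as the target; all covered jobs of color $i$ can be placed inside $I_{t^*}$ (full overlap is allowed), the machine's configuration on intervals serving other colors is untouched, so every previously covered job of every color remains covered and the profit does not decrease, hence remains optimal. Iterating over colors is safe because each step only vacates, and never claims, time serving other colors. Note that the total length $\ell$ then plays no role at all---the point of same-color overlap is precisely that the maximum block length suffices. With that correction your argument is valid and is arguably cleaner than the paper's, since it never needs to relocate other players' jobs, whereas the paper's exchange step does.
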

\begin{proof}
Assume that some socially optimal solution has a different structure. That is, for some color $i$, the machine processes color $i$ in two disjoint intervals.
Specifically, let $[a, b), [b, c)$, and $[c,d)$ be adjacent intervals such that $a < b <c$ and the machine processes jobs from $J_i$ during both $[a, b)$ and $[c,d)$, and jobs from other classes during $[b,c)$. Clearly, the profit from every interval $[x,y)$ is achieved only due to jobs of length at most $y-x$. Consider a schedule in which the jobs covered during $[c,d)$ are shifted to be covered during $[b,b+d-c)$ and the jobs covered during $[b,c)$ are shifted to be covered in $[b+d-c,d)$.
Note that the machine processes jobs of $J_i$ during $[a,b+d-c)$. It is easy to see that the profit of the machine does not reduce, since every job that was covered in the original schedule is covered also in the modified schedule. Therefore, as long as there is a color covered in more than one interval, it is possible to modify the schedule and reduce the number of intervals, without decreasing the sum of weights of covered jobs.     
\end{proof}

Based on the above claim, we show the following: 

\begin{theorem}
\label{th:SOconst}
If $c$ is constant, then a socially optimal solution can be computed in polynomial time.
\end{theorem}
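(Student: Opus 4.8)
The plan is to use Claim~\ref{cl:single} to collapse the placement problem into a purely combinatorial allocation of the time budget $T$ among the $c$ colors, and then to enumerate the relevant allocations, which is affordable precisely because $c$ is constant.

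The first step is the reduction to a length vector. By Claim~\ref{cl:single}, there is a socially optimal solution in which the machine processes each color $i\in\C$ during at most one interval; let $\ell_i\ge 0$ be the length of that interval (with $\ell_i=0$ if color $i$ is never processed). Since these intervals are pairwise disjoint subintervals of $[0,T)$, we have $\sum_{i\in\C}\ell_i\le T$. For any vector $\vec\ell=(\ell_i)_{i\in\C}$ with $\ell_i\ge 0$ and $\sum_i\ell_i\le T$, define
$$\Phi(\vec\ell)=\sum_{i\in\C} w\bigl(\{\,j\in S_i : p_j\le\ell_i\,\}\bigr).$$
I would prove that $\mathrm{val}(\mathrm{OPT})=\max\{\Phi(\vec\ell):\vec\ell\ge 0,\ \sum_i\ell_i\le T\}$. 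For the inequality ``$\le$'': in the socially optimal configuration above, any covered job of color $i$ must lie entirely inside color $i$'s single interval, so its length is at most $\ell_i$; hence the covered weight, which equals $\mathrm{val}(\mathrm{OPT})$, is at most $\Phi(\vec\ell)$. For ``$\ge$'': given any feasible $\vec\ell$, lay out blocks of lengths $\ell_i$ consecutively inside $[0,T)$ in an arbitrary order, place every job $j\in S_i$ with $p_j\le\ell_i$ somewhere inside color $i$'s block (possible because jobs of one color may overlap) and the remaining jobs arbitrarily; configuring the machine to match this layout covers weight $\Phi(\vec\ell)$, so $\mathrm{val}(\mathrm{OPT})\ge\Phi(\vec\ell)$.

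The second step is discretization. For each color $i$, the map $\ell\mapsto w(\{j\in S_i:p_j\le\ell\})$ is a nondecreasing step function that changes value only at points of $\{p_j:j\in S_i\}$. Hence, starting from any maximizer $\vec\ell$, replacing each $\ell_i$ by $\max\bigl(\{0\}\cup\{p_j:j\in S_i,\ p_j\le\ell_i\}\bigr)$ leaves $\Phi$ unchanged and does not increase $\sum_i\ell_i$, so feasibility is preserved. Therefore it suffices to maximize $\Phi$ over vectors with $\ell_i\in L_i:=\{0\}\cup\{p_j:j\in S_i\}$ for every $i$, subject to $\sum_i\ell_i\le T$. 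Since $|L_i|\le|S_i|+1\le n+1$, the number of candidate vectors is at most $(n+1)^c$, which is polynomial in $n$ because $c$ is constant; the algorithm enumerates all of them, discards those violating $\sum_i\ell_i\le T$, evaluates $\Phi$ on the survivors (each evaluation costs $O(n\log n)$, or less after sorting each $S_i$ by length), takes the best, and outputs the corresponding placement from the ``$\ge$'' part of the first step.

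The conceptual core — and the only place that needs care — is the first step: arguing that Claim~\ref{cl:single} lets us assume the winning machine configuration is ``one block per color'' and that, within a color, the collectable weight depends on the layout only through the block length, which together identify $\mathrm{val}(\mathrm{OPT})$ with $\max_{\vec\ell}\Phi(\vec\ell)$. The discretization and enumeration are routine, and the hypothesis that $c$ is constant enters exactly once, to bound the number of length vectors by $(n+1)^c$.
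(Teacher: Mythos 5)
Your proposal is correct and follows essentially the same route as the paper: invoke Claim~\ref{cl:single} to reduce to one block per color, observe that the weight collectable from color $i$ depends only on the block length (so it suffices to try lengths from $\{0\}\cup\{p_j : j\in S_i\}$), and enumerate the at most $(n+1)^c$ resulting candidate vectors, which is polynomial for constant $c$. Your write-up just makes explicit, via the function $\Phi$ and the discretization step, the correctness argument that the paper compresses into its ``w.l.o.g.'' statements.
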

\begin{proof}
    For simplicity, add a dummy job of length $0$ and weight $0$ to each set $J_i$. Let $n_i=|J_i|$.
By Claim~\ref{cl:single}, the machine processes jobs of color $i$ in a single interval. We can assume w.l.o.g., that jobs of $J_i$ are covered in interval $[t_{i-1},t_i)$, such that $t_0=0, t_c=T$ and $\sum_{i=1}^c (t_i-t_{i-1})=T$.  Also w.l.o.g., this interval allocated to color $i$ has length $p_j$ for some $j \in J_i$. Therefore, a socially optimal solution can be calculated by considering at most $\Pi_{i=1}^c n_i$ candidate solutions. Specifically, for every $(a_1,a_2,\ldots,a_c)$, where for all $i$, $0 \le a_i \le n_i$, let $p_{a_i}(i)$ be the length of the $a_i$-th shortest job in $J_i$ and let $W_{a_i}(i)$ be the total weight of the shortest $a_i$ jobs of $J_i$. The solution induced by $(a_1,a_2,\ldots,a_c)$ is feasible if $\sum_{i=1}^c p_{a_i}(i) \le T$ and its profit is $\sum_{i=1}^c W_{a_i}(i)$. 
\end{proof}

\begin{theorem}
\label{th:SOhard}
Computing a socially optimal solution is weakly NP-hard, already for the class $\G_{single}$.
\end{theorem}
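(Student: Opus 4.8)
The plan is to reduce from the classical \textsc{Knapsack} problem (or even from its special case \textsc{Subset-Sum}), which is NP-hard by Karp. Given a \textsc{Knapsack} instance with items $1,\ldots,n$ of sizes $p_1,\ldots,p_n$, values $w_1,\ldots,w_n$ and capacity $T$ — assuming as usual that $p_j\le T$ for all $j$, since larger items can never be chosen — I build a game in $\G_{single}$ by creating $n$ jobs, giving each job $j$ its own color, setting its length to $p_j$ and its weight to $w_j$, and taking the time horizon to be $[0,T)$. I then claim that $\mathrm{val}(\mathrm{OPT})$ for this game equals the optimal value of the \textsc{Knapsack} instance; since this reduction is polynomial, it follows that computing the socially optimal solution is NP-hard already on $\G_{single}$.

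The core of the argument is a characterization of which job sets can be covered simultaneously. For the upper bound, I would argue: in any strategy profile, if $\S$ is the set of jobs the machine covers, then any two distinct jobs of $\S$ have distinct colors (we are in $\G_{single}$), and their processing intervals must be disjoint, since at a common time point the machine would otherwise have to be configured to two colors at once. Hence $\sum_{j\in \S}p_j\le T$, so $\mathrm{val}(\sigma)=w(\S)$ never exceeds the \textsc{Knapsack} optimum, for every profile $\sigma$. For the lower bound, take an optimal knapsack set $S^{*}$; since $\sum_{j\in S^{*}}p_j\le T$, place the jobs of $S^{*}$ back to back starting at $0$ (the interval of each job being the corresponding consecutive partial sum) and place the remaining jobs arbitrarily. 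In this profile, configuring the machine to color $c_j$ exactly during the interval of each $j\in S^{*}$ (and arbitrarily on the leftover $[\sum_{j\in S^{*}}p_j,\,T)$) is a feasible machine configuration covering all of $S^{*}$, so the machine's optimum profit — and thus $\mathrm{val}$ of this profile — is at least $w(S^{*})$. Combined with the upper bound, $\mathrm{val}(\mathrm{OPT})=w(S^{*})$.

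Finally, I would observe that an algorithm computing $\mathrm{val}(\mathrm{OPT})$ decides the \textsc{Knapsack} decision problem by comparing the output to the target value, which gives NP-hardness; and since only single-job colors are used, this holds for $\G_{single}$. As a bonus I would note that choosing $w_j=p_j$ and reducing from \textsc{Subset-Sum} (is there a subset summing to exactly $T$, where $T$ is half the total?) shows that $\mathrm{val}(\mathrm{OPT})=T$ iff the instance is a yes-instance, so hardness already holds on $\G_{single}\cap\G_{prop}$.

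The only real obstacle — and it is a mild one — is making the equivalence with \textsc{Knapsack} fully airtight in the presence of the machine's own optimization: one has to show both that the machine can never be \emph{forced} to cover a set whose total length exceeds $T$ (the disjointness observation) and that the back-to-back placement of a feasible set genuinely realizes the coverage of that set no matter where the uncovered jobs land. Both follow from the simple structural facts above, so no delicate calculation is needed.
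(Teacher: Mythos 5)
Your proposal is correct and follows essentially the same route as the paper: a polynomial reduction from $0$-$1$ Knapsack in which each item becomes a single job of its own color with length $p_j$, weight $w_j$, and horizon $[0,T)$, so that feasible packings correspond exactly to sets of coverable jobs. Your extra details (disjointness of covered intervals, back-to-back placement of the optimal set, and the Subset-Sum remark for $\G_{single}\cap\G_{prop}$) only elaborate the equivalence the paper states more briefly.
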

\begin{proof}
The proof is by a simple reduction from the $0-1$ Knapsack problem. Note that in our reduction $|J_i|=1$ for all $1 \le i \le c$.

An instance of the knapsack problem is given by a set of $c$ items each associated with a weight $p_i$ and a value $w_i$. A subset of these items should be placed in a knapsack with capacity $T$ such that the total value of the packed items is maximal. 

Given an instance of Knapsack, consider a game played in the interval $[0,T)$. There are $c$
colors, where $J_i$ consists of a single job of length $p_i$ and weight $w_i$. It is easy to see that there is a packing with total value $W$ if and only if there is a valid schedule with total profit $W$.
\end{proof}

Next, we show that the above hardness result is tight, in a sense that computing a social optimum can be done in pseudo-polynomial time if job weights are integral.
\begin{theorem}
\label{th:SOPP}
Computing a socially optimal solution can be done in time $O(ncW)$ for $W=\sum_i w_i$ if weights are integral. 
\end{theorem}

\begin{proof}
We present an optimal algorithm, based on dynamic programming (DP). 
The DP table, $MinTime$, is of size $(c+1)(W+1)$, where for every $0 \le i \le c$ and $0 \le w \le W$, $MinTime[i,w]$ is the minimal time required to achieve profit $w$ from the first $i$ color sets. If it is impossible to achieve profit $w$ from the first $i$ color sets, then  $MinTime[i,w]=\infty$.

By Claim~\ref{cl:single} there exists a socially optimal solution in which the machine processes jobs of each color $i\in \C$ during at most one interval. Since all jobs are available in $[0,T]$, the internal order of these intervals is flexible, and in particular, we can assume that jobs of $J_1$ are processed first, then jobs of $J_2$, etc. 

For simplicity, add a dummy job of length $0$ and weight $0$ to each set $J_i$. The dummy job is indexed $0$. Let $n_i=|J_i|$.
For every $0 \le \ell \le n_i$, let $p^i_{\ell}$ be the length of the $\ell$-th job in $J_i$ and let $w^i_{\ell}$ denote the total weight of jobs of length at most $p^i_{\ell}$ in $J_i$. 
As a base case, we set $MinTime[0,0]=0$ and $MinTime[i,w]=\infty$ for every other entry as well as for $w<0$. The general formula is

$$MinTime[i,w] = \min_{0 \le \ell \le n_i} MinTime[i-1, w-w^i_{\ell}] +p^i_{\ell}.$$

Since there is an optimal solution in which for every $1 \le i \le c$, $J_i$ is allocated an interval of length $p^i_{\ell}$, for some $0 \le \ell \le n_i$, the above formula considers, among all, also the optimal assignment. 

The calculation of $p^i, w^i$ is done in $O(n log n)$ as a preprocessing. 
We can then fill the table for every $1 \le i \le c$ and $1 \le w \le W$. The computation of a single entry takes time $O(n_i)= O(n)$, for a total of $O(ncW)$. Once the table is full, we return the  maximal $w$ for which $Time[c,w] \le T$.
\end{proof}

We note that our dynamic programming algorithm assumes integral job weights. This restriction aligns with the definition of a pseudo-polynomial time algorithm, whose complexity is polynomial in the unary length of the input. 


\section{One Job per Color} 
In this section we analyze the class $\G_{single}$ where $c=n$. This class corresponds to the classic interval scheduling problem. For every color $1 \le i \le c$, denote by $p_i, w_i$ the length and the weight of the single job in $J_i$.

Every game $G \in \G_{single}$ induces a $0-1$ Knapsack problem, with a knapsack of capacity $T$, and $n$ items, where item $i$ has size $p_i$ and value $w_i$. Some of the results in this section leverage the relation between the game and its corresponding packing problem. We note the following crucial differences between the problems. First, in the knapsack problem, the physical location of the items is not part of the solution, whereas in our setting, it plays a pivotal role. Second, in the knapsack problem, items that are not packed are simply rejected, whereas in our game, every job must be placed somewhere in the interval 
$[0,T)$; the machine selects a set of non-overlapping jobs, corresponding to the packed items.

\begin{theorem}\label{th:exist}
Every game $G \in \G_{single}$ has an NE profile, and $\mathrm{PoS}(\G_{single})=1$.   
\end{theorem}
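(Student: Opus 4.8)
The plan is to produce a single profile $\sigma^{*}$ that is simultaneously a social optimum and a Nash equilibrium; this yields existence of an NE, and since $\mathrm{val}(\sigma)\le\mathrm{val}(\OPT)$ for every $\sigma$ it also yields $\mathrm{PoS}(\G_{single})=1$. The first ingredient is the knapsack correspondence already visible in the text. For $G\in\G_{single}$ any set of jobs the machine covers consists of pairwise‑disjoint intervals inside $[0,T)$ (distinct colors cannot be processed at once), hence has total length at most $T$ and is a feasible knapsack set; conversely a feasible knapsack set placed consecutively from $0$ is entirely covered. So $\mathrm{val}(\OPT)=w(K)$, where $K\subseteq\J$ is a maximum‑weight set with $\sum_{i\in K}p_i\le T$, and, more importantly, \emph{in every profile} the machine's optimal value is at most $w(K)$.

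For the construction I would fix such a maximum‑weight knapsack set $K$, chosen among all of them to be the one the machine's tie‑breaking rule prefers. Here I use the natural assumption that ties are resolved by a fixed priority order over the jobs (the kind of rule one obtains by backtracking the dynamic program of Theorem~\ref{thm:mach}), so that ``the machine's favourite maximum‑weight knapsack set'' is well defined and this $K$ is preferred to every other maximum‑weight knapsack set in \emph{any} placement. Put $L=\sum_{i\in K}p_i\le T$ and let $\sigma^{*}$ place the jobs of $K$ back‑to‑back inside $[0,L)$, place every $j\notin K$ with $w_j>0$ at $[0,p_j)$ (note $p_j>0$, since otherwise $K\cup\{j\}$ would be a strictly heavier feasible set), and place the remaining weight‑$0$ jobs arbitrarily. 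Since the machine can cover $K$ in $\sigma^{*}$, we get $\mathrm{val}(\sigma^{*})=w(K)=\mathrm{val}(\OPT)$.

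To see that $\sigma^{*}$ is an NE, note first that each player with a job in $K$ is covered and already receives its full weight, and each player with a weight‑$0$ job cannot improve on utility $0$. So take $j\notin K$ with $w_j>0$ and any deviation moving job $j$ to an interval $I_j$. In the resulting profile the jobs of $K$ have not moved, so $K$ is still a family of pairwise‑disjoint intervals, i.e. a feasible covered set; since no covered set can exceed the knapsack optimum, the machine's optimal value is still exactly $w(K)$, and hence every optimal machine solution in this profile is a maximum‑weight knapsack set. By the choice of $K$, the machine's tie‑break then selects $K$, so job $j$ remains uncovered. Thus no player has a profitable deviation, $\sigma^{*}$ is an NE of value $\mathrm{val}(\OPT)$, and therefore $\mathrm{PoS}(\G_{single})=1$.

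The step I expect to carry the argument is the treatment of ties: one must guarantee that the machine's fixed deterministic rule can be made to re‑select $K$ after every unilateral deviation. For a priority‑based rule this is immediate, as above---take $K$ to be the priority‑maximal maximum‑weight knapsack set, which then dominates every other maximum‑weight knapsack set in every placement. (The same knapsack upper bound also silently dispatches the ``obvious'' threat of a non‑$K$ job slipping into the idle slot $[L,T)$: if $j\notin K$ with $w_j>0$ could be placed disjointly from all of $K$, then $K\cup\{j\}$ would be feasible of weight $w(K)+w_j>w(K)$, contradicting optimality of $K$; equivalently, $p_j>T-L$ for all such $j$.) If the tie‑breaking rule were genuinely arbitrary rather than priority‑based, this is the only point that would require a more careful, separate argument.
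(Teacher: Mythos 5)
Your proposal is correct and follows essentially the same route as the paper: place a maximum-weight knapsack set $K$ back-to-back in $[0,T)$, observe that this achieves $\mathrm{val}(\OPT)$ and that no uncovered job can be added without forcing the machine to drop at least the same weight, hence it is an NE and $\mathrm{PoS}=1$. The only difference is that you make explicit the tie-breaking issue (choosing $K$ to be the tie-break-preferred maximum-weight set under a fixed priority rule), a point the paper's proof glosses over by implicitly treating indifference of the machine as non-beneficial for the deviator.
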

\begin{proof}
Observe that an optimal solution of the corresponding knapsack problem induces an NE, by placing the corresponding jobs one after the other along the interval $[0,T)$. This placement is an NE since no non-covered job can be added without rejecting jobs of at least the same weight. This relation with the knapsack problem also implies that $\mathrm{PoS}(G)=1$ for all $G \in \G_{single}$. 
\end{proof}

\begin{proposition}
\label{prop:brd}
For every game $G \in \G_{single}$, best-response dynamics converges to an NE.
\end{proposition}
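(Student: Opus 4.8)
The plan is to exhibit an ordinal potential function for best-response dynamics on $\G_{single}$, so that only finitely many profile changes can occur; since a profile from which no player has a beneficial deviation is by definition a Nash equilibrium, this yields the claim. The first observation is that a player only ever moves from being uncovered to being covered: if job $i$ is covered in the current profile $\sigma$ its utility equals $w_i$, the largest value it can possibly attain, so it has no beneficial deviation; and if $i$ is uncovered with $w_i=0$ its utility is already $0$, again optimal. Hence in any step $\sigma\to\sigma'$ the deviating player $i$ has $w_i>0$, is uncovered in $\sigma$, and is covered in $\sigma'$; in particular $i\notin\S(\sigma)$ and $i\in\S(\sigma')$, where $\S(\cdot)$ denotes the set of jobs the machine covers.

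Next I would record that the social profit is non-decreasing along best-response dynamics. Since $\sigma'$ differs from $\sigma$ only in the location of job $i$, and $i\notin\S(\sigma)$, the jobs of $\S(\sigma)$ remain pairwise non-overlapping under $\sigma'$, so $\S(\sigma)$ is a feasible schedule for the machine in $\sigma'$ and therefore $\mathrm{val}(\sigma')\ge w(\S(\sigma))=\mathrm{val}(\sigma)$. As $\mathrm{val}$ is a sum of a subset of $\{w_1,\dots,w_n\}$ it takes only finitely many values, so it can strictly increase only finitely often; it remains to bound the number of consecutive steps along which $\mathrm{val}$ stays constant.

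This is where the machine's tie-breaking rule matters, and I would take it to be induced by a fixed total order: among all maximum-weight feasible schedules the machine returns the one that is smallest in a fixed lexicographic order on subsets of $\J$. Writing $\prec$ for the resulting total order on schedules — compare first by weight, then by the lexicographic tie-break, so that $\prec$ refines the weight order — the machine always returns the $\prec$-maximum feasible schedule. Define $\Phi(\sigma)$ to be the rank of $\S(\sigma)$ in $\prec$. In a step $\sigma\to\sigma'$ the schedule $\S(\sigma)$ is still feasible in $\sigma'$, so $\S(\sigma')\succeq\S(\sigma)$; and $i\in\S(\sigma')\setminus\S(\sigma)$ gives $\S(\sigma')\ne\S(\sigma)$, hence $\S(\sigma')\succ\S(\sigma)$ and $\Phi$ strictly increases. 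Since $\Phi$ ranges over the finitely many subsets of $\J$, best-response dynamics terminates, necessarily at a Nash equilibrium.

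I expect the last step to be the crux. Under a completely arbitrary, in particular profile-dependent, deterministic tie-breaking rule, $\mathrm{val}$ is only weakly monotone and two distinct covered sets of equal weight could in principle be exchanged indefinitely; the point is that once the tie-break is a single global preference over schedules consistent with weight maximization, the lexicographic refinement of $\mathrm{val}$ becomes a genuine ordinal potential. Everything else — the monotonicity of $\mathrm{val}$ and the fact that only uncovered players move — is routine.
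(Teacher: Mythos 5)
Your argument is correct, and its engine is the same as the paper's: the deviating job was uncovered, so the machine's previous covered set remains feasible after the move and the machine's profit cannot drop, while the benefit of the deviation forces the new covered set to contain the deviator; finiteness then stops the dynamics. Where you genuinely add something is the treatment of ties. The paper simply asserts that every beneficial best response strictly increases the machine's profit and concludes from boundedness; this is airtight when no two feasible covered sets have equal weight (or if the machine is assumed to keep its configuration unless a strictly better one exists), but the tie case is exactly what is left unaddressed there. You isolate that point and repair it by committing to a tie-breaking rule induced by one fixed total order $\prec$ refining the weight order, so that the $\prec$-rank of the covered set is an ordinal potential that strictly increases at every improving step (the old covered set is still feasible, the new one contains the deviator), giving termination after at most $2^n$ improving moves. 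Your closing caveat is also well founded: under a completely arbitrary profile-dependent deterministic tie-break the proposition can genuinely fail. For instance, take three equal-weight unit jobs and $T$ slightly below $2$, so that any two placements overlap and the machine's profit is constantly the weight of a single job, and let the machine cover $a$ if $s_a>s_b$, else cover $b$ if $s_b>s_c$, else cover $c$; then the currently uncovered player can always relocate profitably by an ever smaller shift of its start time, and best-response dynamics never terminates even though an NE exists. So your version buys robustness to equal weights at the price of fixing a consistent tie-breaking rule, whereas the paper's shorter proof implicitly assumes the tie case away.
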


\begin{proof}
A deviation is beneficial for a job $j \in \J$ only if $j$ is not covered before the deviation and becomes covered afterward. Consider the machine's profit. After job $j$'s deviation, the machine can maintain its configuration over the interval $[0,T)$ and thus its profit. Therefore, the machine modifies its strategy to cover $j$, only if this increases its profit, or, based on the machine's tie-breaking rule, a lexicographically smaller subset of jobs of the same profit is covered. Since the maximum profit from covered jobs is bounded and there are finitely many subsets of jobs, the result follows.
\end{proof}

We turn to consider the problem of computing an NE. 
It might seem that any approximation algorithm for the knapsack problem could easily be adapted to our game by scheduling,one after the other, the jobs corresponding to the packed items. However, the resulting schedule is not necessarily an NE. While best-response dynamics are guaranteed to converge to an NE with at least the same profit, there does not seem to be a  straightforward way to bound the convergence time. However, as we show, an NE can be computed efficiently.

\begin{theorem}
\label{thm:NE_single_exists}
For every game $G \in \G_{single}$, computing an NE profile can be done in polynomial time.
\end{theorem}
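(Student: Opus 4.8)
The plan is to exploit the structure of Nash equilibria in $\G_{single}$ directly, rather than going through an approximation algorithm for knapsack (which, as noted, need not yield an NE and gives no obvious bound on best-response convergence time). The key observation is the one already used in the proof of Theorem~\ref{th:exist} and Proposition~\ref{prop:brd}: a profile is an NE precisely when no currently uncovered job can be relocated into a gap of the machine's schedule so as to become covered. In $\G_{single}$ a player's strategy is just the placement of one interval of fixed length $p_i$, so the only way a deviation helps job $i$ is if there is a free slot of length at least $p_i$ somewhere in $[0,T)$ that the machine would then choose to use (possibly evicting lighter jobs, but the machine only evicts if doing so still increases its total weight, so the relevant condition is in terms of gaps in the covered set). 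Thus an NE is characterized by: the set of covered jobs forms a feasible packing, they are laid out consecutively (w.l.o.g.\ by Claim-type shifting arguments), and no uncovered job fits into the residual capacity $T - \sum_{\text{covered}} p_i$.

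The algorithm I would propose is: first sort the jobs and compute, via a standard pseudo-polynomial / or rather polynomial DP exploiting that there are only $n$ jobs, a packing $P\subseteq\J$ that is \emph{maximal} in the sense that $\sum_{i\in P}p_i \le T$ and for every $j\notin P$, $p_j > T - \sum_{i\in P}p_i$; among all such maximal packings we want one whose total weight cannot be improved by a single swap, but actually the cleanest route is to directly compute an NE-inducing packing. Concretely: start from any maximal packing (greedily add jobs in some order until no more fit), place its jobs consecutively at the left end of $[0,T)$; then run a bounded local-improvement phase where, as long as some uncovered job $j$ could displace a set of already-placed jobs of smaller total weight and still fit, perform that swap. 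Because each such swap strictly increases $\sum_{i\in P} w_i$ and the number of distinct achievable weight values is polynomially bounded (there are at most $2^n$ subsets but each swap increases weight, and one can argue the weight takes at most polynomially many values along the process, or simply bound the number of swaps by $n$ since each job enters at most... ) — this is exactly the step I expect to be the main obstacle: one must argue that this local search terminates in polynomially many steps, not just that it terminates.

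To handle that obstacle I would avoid open-ended local search and instead give a direct constructive characterization: I expect the authors to show that one can compute in polynomial time a packing $P$ such that (i) $P$ is feasible, and (ii) for every $j\notin P$ and every way of removing a subset $R\subseteq P$ with $\sum_{i\in R}p_i \ge p_j - (T-\sum_{i\in P}p_i)$ we have $w_j \le w(R)$ — i.e.\ $P$ is a local optimum with respect to the "insert one job, evict the minimum-weight blocking set" move. Since evicting the \emph{minimum}-weight blocking set for a fixed incoming job $j$ is itself a small knapsack-type subproblem (minimize weight of a removed set subject to freeing enough room), but here it simplifies because once the covered jobs are laid out consecutively the blocking set for inserting $j$ into a particular position is just the jobs overlapping a length-$p_j$ window — so the number of candidate blocking sets is $O(n)$ per job, hence $O(n^2)$ total, each checkable in $O(n)$ time. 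So the algorithm is: compute an NE by iterating the move "pick an uncovered job, slide a $p_j$-window, evict the minimum-weight overlapping block if that block is lighter than $w_j$, reinsert"; each successful move strictly raises the covered weight, and one shows the covered weight — being a sum of a subset of the $n$ weights that only increases — changes at most $O(n^2)$ times (or bound moves by a potential argument on $|P|$ and $w(P)$), giving overall polynomial time. The conclusion is that the resulting profile has no beneficial deviation for any player, hence is an NE, proving the theorem.
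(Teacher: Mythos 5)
There is a genuine gap, and it is exactly the one you flagged yourself: the termination bound for your local-improvement phase. Your argument that the covered weight ``changes at most $O(n^2)$ times'' is unsupported --- the covered weight is a subset sum of the $n$ job weights, and a strictly increasing sequence of subset sums can pass through exponentially many distinct values (e.g.\ weights $1,2,4,\ldots,2^{n-1}$), so ``each swap strictly increases $w(P)$'' gives only finiteness, not a polynomial bound; nor does a potential on $|P|$ help, since $|P|$ can go up and down across evict-and-insert moves. The paper explicitly sidesteps this obstacle (it remarks that improvement dynamics converge but that there is no straightforward bound on convergence time) and instead gives a one-shot construction: compute the heaviest single job $h$ and the maximum-weight pair $\{a,b\}$ with $p_a+p_b\le T$; if $w_a+w_b\le w_h$, stack every job at $[0,p_i)$ (then no other job fits alongside $h$, so no deviation helps); otherwise place $a,b$ first, greedily append the remaining jobs in non-increasing weight order, and park every job that does not fit so that it overlaps both $a$ and $b$. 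Stability then follows without any iteration: a non-covered job deviating to the right of $a,b$ meets only heavier jobs (by the greedy weight order), and a deviation intersecting $[0,p_a+p_b)$ can be covered together with at most one other job there, which would contradict the maximality of the pair $\{a,b\}$.

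A secondary issue in your write-up: your NE characterization treats only the covered packing, but an NE profile must also fix the positions of the uncovered jobs, and those positions matter --- an uncovered job sitting in an idle gap would be picked up by the machine, and its location also influences which deviations of \emph{other} players the machine would accept. The paper's choice to park all non-fitting jobs on top of both $a$ and $b$ is precisely what makes the stability argument go through; your proposal leaves the uncovered jobs' placement unspecified, so even granting termination of the local search, the final profile is not shown to be an equilibrium.
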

\begin{proof}
We present an algorithm for computing an NE. Recall that for every job $j \in \J$, it holds that $0 \le p_j \le T$. We also assume that for at least one pair of jobs, $x,y$ we have that $p_x+p_y \le T$, as otherwise, covering just the most profitable job is clearly an NE.

\begin{algorithm}[H]
\caption{ - Computing an NE schedule of a game $G \in \G_{single}$}
\label{alg:single}
\begin{algorithmic}[1]
\STATE Let $h$ be a job with maximal weight.
\STATE Let $\{a,b\}$ be a pair of jobs for which $w_a+w_b$ is maximal among all pairs $x,y$ fulfilling $p_x+p_y \le T$. Possibly, $a=h$ or $b=h$.
\IF {$w_a+w_b \le w_h$}
\STATE For every $i \in \J$, schedule job $i$ in $[0,p_i)$ and halt
\ELSE
\STATE Schedule job $a$ in $[0, p_a$) and job $b$ in $[p_a, p_a+p_b)$.
\STATE Sort $\J \setminus\{a,b\}$ such that $w_1 \ge w_2 \ge \ldots\ge w_{n-2}$.
\STATE Let $C=p_a+p_b$.
\FOR{$i=1$ to $n-2$}
\IF{$C+p_i \le T$}
\STATE Schedule job $i$ in $[C, C+p_i$).
\STATE $C=C+p_i$.
\ELSE
\STATE Schedule job $i$ such that it overlaps both $a$ and $b$.
\ENDIF
\ENDFOR
\ENDIF
\end{algorithmic}
\end{algorithm}

The algorithm distinguishes between two cases. In the first case, the most profitable job, $h$, is more profitable than any pair of jobs that can fit together in $[0,T)$. This case is handled in lines 3-4. In the second case, let $a,b$ be a most profitable pair that can fit in $[0,T)$ and have total profit higher than $h$. 
This case is handled in lines 6-16.

Assume first that the condition in line 3 is valid. Thus, for any job $i \neq h$, $p_h +p_i >T$. Indeed, if for some job $i \neq h$, $p_h +p_i \le T$, then $w_a+w_b\geq w_h+w_i>w_h$, contradicting the condition in line 3.
This implies that the schedule produced in step 4 is an NE: 
all jobs overlap at $t=0$, the machine processes job $h$, and no job $a$ can move to an interval disjoint with $h$ or disjoint with a job $b$ such that $w_a + w_{b} > w_h$. 

Assume next that the condition in line 3 is not valid, that is, $w_a+w_b > w_h$. The algorithm first schedules $a$ and $b$ one after the other in the leftmost position of the schedule, and then considers the remaining jobs in non-increasing order by weight. A job that fits is placed in the leftmost idle position; a job that does not fit, is placed such that it overlaps both $a$ and $b$. Let $\sigma$ denote the resulting schedule. 

Since $w_a + w_b > w_h$, the machine will cover $a$ and $b$ and the jobs that were placed one after the other during the loop in lines 7-16. The jobs that intersect $a$ and $b$ will not be covered. We show that $\sigma$ is an NE. 

For every non-covered job $i$, at the time $i$ is considered by the algorithm, the total busy time of the machine is more than $T-p_i$. Assume that $i$ moves to an interval that begins after $p_a+p_b$. By the algorithm, such an interval must intersect at least one job that is more profitable than $i$, and therefore, covering $i$ is not beneficial for the machine. Assume next that $i$ moves to an interval that intersects $[0, p_a+p_b]$. Note that at most one additional job that intersects $[0, p_a+p_b]$ (possibly $a$ or $b$) can be covered together with job $i$. However, if covering a different pair that begins in $[0,p_a+p_b)$ is beneficial for the machine, we get a contradiction to the choice of $\{a,b\}$ in step 2 of the algorithm.
\end{proof}

We turn to analyze the equilibrium inefficiency. While the price of stability is $1$, we show that the price of anarchy is linear in $n$, which means that highly inefficient equilibria exist as well.

\begin{theorem}\label{th:poa}
Let $G \in \G_{single}$. If $n \le 5$ then $\mathrm{PoA}(G) \le 2$. If $n \ge 6$, then $\mathrm{PoA}(G) \le \frac{n-1}{2}$.
\end{theorem}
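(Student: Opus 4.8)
Fix an NE $\sigma$ with covered set $S$, and write $V=\mathrm{val}(\sigma)=w(S)$. By the reduction of Theorem~\ref{th:SOhard}, an optimal profile corresponds to an optimal $0$–$1$ knapsack packing; let $O$ be the set of jobs it covers, so $\mathrm{val}(\OPT)=w(O)$ and $\sum_{j\in O}p_j\le T$. We may assume every job has strictly positive weight, since zero-weight jobs affect neither $V$ nor $w(O)$; the task is to bound $w(O)/V$. The first step is to isolate the equilibrium inequality. For any job $j\notin S$ and any interval $I\subseteq[0,T)$ with $|I|=p_j$: after $j$ moves to $I$ the machine can keep the schedule $S$ (profit $V$), so by Theorem~\ref{thm:mach} it covers $j$ only if that strictly raises its profit; hence $w_j+M_j(I)\le V$, where $M_j(I)$ is the maximum weight of a pairwise-disjoint family of jobs from $\J\setminus\{j\}$, all in their $\sigma$-positions and disjoint from $I$. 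Taking the family $\{i\in S:\ \sigma\text{-interval of }i\text{ is disjoint from }I\}$ gives $M_j(I)\ge V-w(B_j(I))$, where $B_j(I)\subseteq S$ is the set of covered jobs whose $\sigma$-intervals meet $I$. Therefore, for every $j\notin S$,
$$ w_j\ \le\ \min_{|I|=p_j}\ w\bigl(B_j(I)\bigr). $$

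From this I extract several consequences. (a) $w_j\le V$ for all $j\notin S$, so the crude estimate $w(O)\le w(O\cap S)+\sum_{j\in O\setminus S}w_j\le V+(n-1)V$ holds. (b) If some placement of $j$ meets no covered job then $w_j\le 0$, impossible — so every uncovered job is strictly longer than every gap of the $\sigma$-layout of $S$. (c) If some placement of $j$ meets exactly one covered job $i$, then $w_j\le w_i$; pairing this with the fact that $O$ is a packing, any two uncovered jobs of $O$ that can each be placed hitting only a single common covered job have weights summing to at most $V$.

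Now for the charging argument. Using (c), short jobs of $O\setminus S$ can be placed hitting few covered jobs, forcing their weights down to $\le V/2$ after pairing; the constraint $\sum_{j\in O}p_j\le T$ guarantees that at most one job of $O$ has length $>T/2$, and that single "big" job is charged separately at $\le V$. Since $|O\setminus S|\le n-|S|\le n-1$ and $w(O\cap S)\le V$, aggregating these bounds yields $w(O)\le\frac{n-1}{2}V$ for $n\ge 6$. For $n\le 5$ the number of uncovered jobs of $O$ is so small that (a) alone — distinguishing whether $|O\setminus S|\le 1$ — already delivers $w(O)\le 2V$; the residual case $|S|\le 2$ is handled by (b), which forces every positive-weight job to be long when $|S|$ is small and hence caps $|O|$.

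The technical heart is making (c) robust. A job that, because of its length and the \emph{fixed} $\sigma$-positions of $S$, must overlap two or more covered jobs in every placement only gets the weaker bound $w_j\le \min_I w(B_j(I))$ with $|B_j(I)|\ge 2$; one has to argue such "long" jobs are scarce (at most one of length $>T/2$ inside $O$, and the interaction of the remaining lengths with the gap structure of the $\sigma$-layout of $S$ lets the rest be placed hitting a single covered job, so the pairing bound applies). Pinning down the exact constant — why it is $\frac{n-1}{2}$ for $n\ge 6$ yet flattens to $2$ for $n\le 5$ rather than continuing to decrease — is precisely what forces the split into cases on $|S|$ and on $n$, and is where most of the work lies.
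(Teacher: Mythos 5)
Your starting point is sound: the equilibrium inequality $w_j\le w(B_j(I))$ for every uncovered job $j$ and every placement $I$ is exactly the lever the paper uses. But the quantitative heart of the proof is missing or incorrect. First, your consequence (c) does not follow as stated: if two uncovered jobs $a,b$ can each be placed hitting only the single covered job $i$, the deviation inequality (applied to one deviator at a time) gives $w_a\le w_i$ and $w_b\le w_i$, i.e.\ $w_a+w_b\le 2w_i$, not $w_a+w_b\le \mathrm{val}(\sigma)$. The correct pairwise bound (the paper's Claim~\ref{cl:pairs}) is proved by placing $a$ flush left and $b$ flush right and showing the two sets of covered jobs they would displace are \emph{disjoint}; this disjointness needs $p_a+p_b+p_x+p_y\le T$, which is available only when \emph{all} jobs fit in $[0,T)$, i.e.\ when $\OPT$ covers every job. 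Your proposal asserts the pairing bound for uncovered jobs of $O$ without this hypothesis, and when $\OPT$ covers only $n^*\le n-1$ jobs the paper needs an entirely different argument (averaging over the weights of $\OPT$'s jobs, Claim~\ref{cl:J1andJ23}, and a case analysis on which of jobs $2,3,4$ are covered, showing either $\mathrm{val}(\sigma)\ge 2\,\mathrm{val}(\OPT)/n^*$ or a beneficial migration). Nothing in your sketch plays that role.

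Second, even granting the pairing bound, your aggregation does not reach the claimed constants. Pairing up the at most $n-1$ uncovered jobs of $O$ and adding $w(O\cap S)\le \mathrm{val}(\sigma)$ gives roughly $\bigl(1+\lceil (n-1)/2\rceil\bigr)\mathrm{val}(\sigma)$, which exceeds $\frac{n-1}{2}\mathrm{val}(\sigma)$ for every $n$ (e.g.\ $4\mathrm{val}(\sigma)$ vs.\ $2.5\,\mathrm{val}(\sigma)$ at $n=6$). The paper closes this gap with the counting lemma $n_{out}(\sigma)\le n_{in}(\sigma)$ (Claim~\ref{cl:halfIN}, an averaging argument comparing the shortest uncovered job to the longest idle interval), for which your proposal has no analogue; your remarks about ``at most one job longer than $T/2$'' and the gap structure do not substitute for it. The $n\le 5$ case is likewise unproven: for $n=5$ with two covered and three uncovered jobs of $O$, consequence (a) plus pairing only gives $3\,\mathrm{val}(\sigma)$, and the assertion that $|S|\le 2$ is ``handled by (b)'' is not an argument. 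So the proposal identifies the right first inequality but lacks the two lemmas (the disjoint prefix/suffix pairing bound and the $n_{out}\le n_{in}$ count) and the separate $n^*<n$ analysis that actually deliver the stated bounds.
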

\begin{proof}
Let $\mathrm{OPT}$ be an optimal profile. Denote by $n^* \le n$ the number of jobs that are covered in OPT. Sort the jobs covered by OPT such that $w_1 \ge w_2 \ge \ldots, \ge w_{n^*}$. 

Since job $1$ can be selected by the machine, for any NE profile $\sigma$ it holds that $\mathrm{val}(\sigma) \ge w_1$.
If $n^*=2$ then  $\mathrm{PoA}(G) \le 2$, since $\mathrm{OPT} = w_1+w_2$ and $w_1\geq w_2$.

Consider next the case $n^*=3$. If $w_1 \ge \mathrm{OPT}/2$, then clearly, $\mathrm{PoA}(G) \le 2$. If $w_1 < \mathrm{OPT}/2$ then any two jobs in $A_3=\{1, 2,3\}$ have total profit at least $\mathrm{OPT}/2$. Consider an NE $\sigma$. If at least two jobs from $A_3$ are covered, then $\mathrm{val}(\sigma) \ge \mathrm{OPT}/2$. Otherwise, assume that job $1$ is assigned to interval $[t,t+p_1)$. 
Note that at least one of the intervals $[0,t)$, and $[t+p_1, T)$ have length at least $\frac{T-p_1}{2}$. Since all the jobs in $A_3$ are covered in $\mathrm{OPT}$, we have that $p_2+p_3 \le T-p_1$, thus $\min\{p_2,p_3\} \le \frac{T-p_1}{2}$. 
If $\mathrm{val}(\sigma) < \mathrm{OPT}/2$, each of job $2$ and $3$ intersects job $1$. However, in this case, at least one of $2$ and $3$ fits into $[0,t)$ or $[t+p_1, T)$, implying that at least one of these two jobs has a beneficial migration. Therefore, if $n^*=3$ then $\mathrm{PoA}(G) \le 2$.

We turn to consider the case $n^* \ge 4$.
Our proof distinguishes between the case $n^*=n$ and the case $n^* \le n-1$. 
 
{\bf Case 1: $n^* = n$:} Assume that all jobs are covered in $\mathrm{OPT}$. Note that this implies that $\sum_{i=1}^n p_i \le T$. 
Let $\sigma$ be an NE profile. Denote by $\J_{in}(\sigma)$ and $\J_{out}(\sigma)$ the sets of jobs that are covered and non-covered, respectively, by the machine in $\sigma$. Let $n_{in}(\sigma) = |\J_{in}(\sigma)|$ and $n_{out}(\sigma) = |\J_{out}(\sigma)|$. Note that $n_{out}(\sigma)+n_{in}(\sigma)=n$.
\begin{claim}
\label{cl:halfIN}
$n_{out}(\sigma) \le n_{in}(\sigma)$.
\end{claim}
\begin{proof}
    Let $B=\sum_{i\in \J_{in}(\sigma)} p_i$ be the total busy time of the machine in $\sigma$. Since all jobs in the instance can fit into $[0,T)$, the machine is idle at least $\sum_{i\in \J_{out}(\sigma)}p_i$ time units. Thus, $\sum_{i\in \J_{out}(\sigma)} p_i\le T-B$. There are $n_{in}(\sigma)$ covered jobs, hence, at most $n_{in}(\sigma)+1$ idle intervals, where at least one of the idle intervals has length at least $\frac{T-B}{n_{in}(\sigma)+1}$.     

By averaging, the shortest non-covered job has length 
at most $\frac{\sum_{i\in \J_{out}(\sigma)}p_i}{n_{out}(\sigma)} \le \frac{T-B}{n_{out}(\sigma)}$. Given that $\sigma$ is an NE, no non-covered job can fit into an idle interval, therefore, $\frac{T-B}{n_{out}(\sigma)} > \frac{T-B}{n_{in}(\sigma)+1}$, implying $n_{out}(\sigma) < n_{in}(\sigma)+1$, therefore, $n_{out}(\sigma) \le n_{in}(\sigma)$.
\end{proof}

\begin{claim}
\label{cl:pairs}
For any two jobs $a, b \in \J_{out}(\sigma)$, it holds that $w_a+w_b \le \mathrm{val}(\sigma)$.
\end{claim}
\begin{proof}
Let $x$ be the leftmost job in $\J_{in}(\sigma)$ such that $f_x \ge p_a$. Let $X$ be the set of jobs in $\J_{in}(\sigma)$ that are covered in $[0,f_x)$. For an illustration, see Fig. \ref{fig:ab}. Since $\sigma$ is an NE, $w_a \le \mathrm{val}(X)$.
Similarly, let $y$ be the rightmost job in $\J_{in}(\sigma)$ such that $s_y \le T-p_b$. Let $Y$ be the set of jobs in $\J_{in}(\sigma)$ that are covered in $[s_y,T)$. Since $\sigma$ is an NE, $w_b \le \mathrm{val}(Y)$. Also, $X \cap Y = \emptyset$ since all jobs can fit into $[0,T)$ and in particular $p_a+p_b+p_x+p_y \le T$. We conclude that $w_a+w_b \le \mathrm{val}(X \cup Y) \le \mathrm{val}(\sigma)$. 
\end{proof}

\begin{figure}[ht]
\centering
\includegraphics[width=0.4\textwidth]{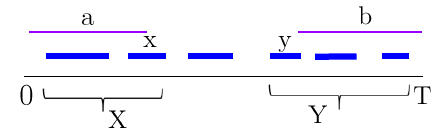}
\caption{\normalfont{
The non-covered jobs $a$, $b$, and the sets $X,Y$. The jobs of $\J_{in}(\sigma)$ are bold.}}
\label{fig:ab}
\end{figure}
We use the above claims to conclude the analysis for $n^*=n \ge 4$ as follows.
\begin{claim}
\label{cl:cge4}
If $n=n^*=4$, then $\mathrm{PoA}(G) \le 2$, and if $n=n^*\ge 5$, then $\mathrm{PoA}(G) \le \frac{n-1}{2}$.
\end{claim}

\begin{proof}
Recall that  $\mathrm{OPT} = \mathrm{val}(J_{in}(\sigma))+ \mathrm{val}(J_{out}(\sigma))= \mathrm{val}(\sigma)+ \mathrm{val}(J_{out}(\sigma))$. Let $n=4k+r$ for $0 \le r \le 3$.
\begin{itemize}
\item If $r=0$, then  by Claim~\ref{cl:halfIN}, $n_{out}(\sigma)\le 2k$. Claim~\ref{cl:pairs} implies that $\mathrm{val}(\J_{out}(\sigma)) \le k\cdot \mathrm{val}(\sigma)$.
We get that $\mathrm{PoA}(G) \le 1+k = 1 + \frac n 4$. For all $n \ge 6$ it holds that $1+ \frac n 4 \le \frac{n-1}2$. Note that for $n=4$ we have a bound of $\mathrm{PoA}(G) \le 2$, thus, this case covers correctly all $n=4k$ for $k \ge 2$.

\item If $r=1$, then  by Claim~\ref{cl:halfIN}, $n_{out}(\sigma)\le 2k$. Claim~\ref{cl:pairs} implies that $\mathrm{val}(\J_{out}(\sigma)) \le k\cdot \mathrm{val}(\sigma)$. We get that $\mathrm{PoA}(G) \le 1+k = 1 + \frac {n-1} 4$. For any $n \ge 5$ it holds that $1+ \frac {n-1} 4 \le \frac{n-1}2$. 
\item If $r=2$, then  by Claim~\ref{cl:halfIN}, $n_{out}(\sigma)\le 2k+1$. Claim~\ref{cl:pairs} implies that $\mathrm{val}(\J_{out}(\sigma)) \le (k+\frac 1 2)\cdot \mathrm{val}(\sigma)$.
We get that $\mathrm{PoA}(G) \le 1+(k+\frac 1 2) = \frac 3 2 + \frac {n-2} 4$. For any $n \ge 6$ it holds that $\frac 3 2+ \frac {n-2} 4 \le \frac{n-1}2$.
\item If $r=3$, then  by Claim~\ref{cl:halfIN}, $n_{out}(\sigma)\le 2k+1$. Claim~\ref{cl:pairs} implies that $\mathrm{val}(\J_{out}(\sigma)) \le (k+\frac 1 2)\cdot \mathrm{val}(\sigma)$. We get that $\mathrm{PoA}(G) \le 1+ (k+\frac 1 2) = \frac 3 2 + \frac {n-3} 4$. For any $n \ge 5$ it holds that $\frac{3}{2}+ \frac {n-3} 4 \le \frac{n-1}2$. 
\end{itemize}
\end{proof}

{\bf Case 2: $n^* \le n-1$:} If $w_1 > \frac{2\mathrm{OPT}}{n^*}$, then we are done. 
Otherwise, by averaging, for all $1 \le i \le n^*$, we have that $\sum_{k=1}^i w_i \ge \frac {i \cdot \mathrm{OPT}}{n^*}$. We show that job $1$, together with any of $2$ or $3$  are sufficient to achieve profit $\frac {2 \cdot \mathrm{OPT}}{n^*}$. Formally:
\begin{claim}
\label{cl:J1andJ23}
    For $i=2,3$ it holds that $w_1+w_i \ge \frac {2 \cdot \mathrm{OPT}}{n^*}$.
\end{claim}
\begin{proof}
    By averaging, $w_1+w_2 \ge \frac {2 \cdot \mathrm{OPT}}{n^*}$.
The proof for $w_1+w_3$ is a bit more involved. Let $w_1=\frac{\mathrm{OPT}}{n^*}+x+z$, $w_2=\frac{\mathrm{OPT}}{n^*}+x$ for some $x,z$ with $x+z\geq0$ and $z\geq 0$. Assume by contradiction that $w_1+w_3 < \frac{2\mathrm{OPT}}{n^*}$. Therefore, $w_3=\frac{\mathrm{OPT}}{n^*}-y$ for some $y > x+z\geq 0$. The total weight of the remaining $n^*-3$ jobs is $\mathrm{OPT} - \frac{3\mathrm{OPT}}{n^*} -2x-z+y= \frac{(n^*-3)\mathrm{OPT}}{n^*} -x - (x+z-y) > \frac{(n^*-3)\mathrm{OPT}}{n^*} - x$ (since $y> x+z$). The  weight of job $4$, which is at least the average among the remaining $n^*-3$ is more than $\frac{\mathrm{OPT}}{n^*} - \frac{x}{n^*-3}$. However, by the sorting, $w_4 \le w_3$ implying that  $\frac{x}{n^*-3} > y$, which contradicts $y>x+z$ and $z\ge0$. Thus,
$w_1+w_3 \ge \frac {2 \cdot \mathrm{OPT}}{n^*}$.
\end{proof}

Consider an NE profile $\sigma$. If job $1$ is covered, then, similar to the analysis of $n^*=3$, at least one of job $2, 3$ is covered, or has a beneficial migration, implying that either $\mathrm{val}(\sigma) \ge \frac{2 \cdot \mathrm{OPT}}{n^*}$ (by Claim~\ref{cl:J1andJ23}), or contradicting the stability of $\sigma$.

Therefore, we are left with the case that job $1$ is not covered.
If $\mathrm{val}(\sigma)<\frac{2\mathrm{OPT}}{n^*}$ it must be that job $1$ intersects with both $2, 3$. 
We first analyze the case $w_2+w_3 \ge \frac{2\mathrm{OPT}}{n^*}$: 
\begin{itemize}
    \item If both $2$ and $3$ are covered in $\sigma$, then $\mathrm{val}(\sigma)\ge\frac{2\mathrm{OPT}}{n^*}$.
    \item If one of $2$ and $3$ is covered, say $a$, then assume that job $a$ is assigned in interval $[t,t+p_a)$. 
Note that at least one of the intervals $[0,t)$, and $[t+p_a, T)$ has length at least $\frac{T-p_a}{2}$. Let $b=5-a$ be the other job in $\{2,3\}$. Since all the jobs in $A_3$ are covered in $\mathrm{OPT}$, we have that $p_1+p_b \le T-p_1$, thus $\min\{p_1,p_b\} \le \frac{T-p_1}{2}$, implying that the shortest unassigned job in $A_3$ has a beneficial migration if $\mathrm{val}(\sigma)<\frac{2\mathrm{OPT}}{n^*}$.
\item If both $2$ and $3$ are not covered, then the shortest of these two jobs has a beneficial migration by moving before or after job$1$if $\mathrm{val}(\sigma) <\frac{2\mathrm{OPT}}{n^*}$ (see the analysis for $n^*=3$).
    \end{itemize}
    
We turn to analyze the case $w_2+w_3 < \frac{2\mathrm{OPT}}{n^*}$. Consider the $4$th most profitable job in $\mathrm{OPT}$. Recall that $\sum_{i=1}^4 w_i \ge \frac {4 \cdot \mathrm{OPT}}{n^*}$. Since we analyze the case $w_1<\frac{2\mathrm{OPT}}{n^*}$, and $w_2+w_3 < \frac{2\mathrm{OPT}}{n^*}$, we have that $w_2+w_3+w_4 \ge \frac{2\mathrm{OPT}}{n^*}$ and $w_1+w_4 \ge  \frac{2\mathrm{OPT}}{n^*}$. 
That is, for all $k \in \{2,3,4\}$ we have that $w_1+w_k \ge \frac{2\mathrm{OPT}}{n^*}$. Consider the three jobs $\{2,3,4\}$.
\begin{itemize}
    \item If all the three are covered in $\sigma$ then $\mathrm{val}(\sigma) \ge w_2+w_3+w_4 \ge \frac{2\mathrm{OPT}}{n^*}$.
    \item If at most one of the three is covered, then the shortest uncovered job has a beneficial migration by moving before or after job$1$if $\mathrm{val}(\sigma) <\frac{2\mathrm{OPT}}{n^*}$ (see the analysis for $n^*=3$).
    \item If exactly two are are covered, then 
let $x$ and $y$ be covered, while $z$ is not. Assume the covered jobs are placed in $[s_x,f_x)$ and $[s_y,f_y)$, respectively, where w.l.o.g., that $f_x \le s_y$. 

If $\mathrm{val}(\sigma)<\frac{2\mathrm{OPT}}{n^*}$ and $\sigma$ is an NE, it must be that job $1$ cannot be placed before $y$, thus, $s_y<p_1$. In addition, $z$ cannot be placed after $y$, thus $T-f_y < p_z$.

Given that $\mathrm{OPT}$ processes all four jobs, we have that $p_1+p_x+p_y+p_z \le T$. However, the above inequalities imply that 
$p_1+p_y+p_z > s_y +(f_y-s_y)+T-f_y=T$,
contradicting the stability of $\sigma$ or the assumption that $\mathrm{val}(\sigma)<\frac{2\mathrm{OPT}}{n^*}$.
\end{itemize}

We conclude that if $n^*<n$, then $\mathrm{PoA}(G) \le \frac{n^*}2 \le \frac{n-1}{2}$. 
\end{proof}

Next, we show that the above analysis is tight.
\begin{theorem}
\label{thm:PoAtight}
For every $n \ge 2$, there exists a game $G_n \in \G_{single}$ with $n$ players such that $\mathrm{PoA}(G_n)=2$. For every $n \ge 5$, there exists a game $G_n$ with $n$ players such that $\mathrm{PoA}(G_n) \ge \frac{n-1}{2}$.
\end{theorem}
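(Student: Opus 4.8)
The plan is to prove the two claims with two explicit families of games, in each case engineering a ``crowded'' strategy profile out of which no single job can profitably move.

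\textbf{The $\mathrm{PoA}=2$ bound, $n\ge 2$.} Take $T=2$, let all $n$ jobs have length $1$ and weight $1$ (so $G_n\in\G_{single}$), and fix the tie-breaking rule ``among equally profitable schedules cover the job of smallest index.'' First I would note that $\mathrm{val}(\mathrm{OPT})=2$: two unit jobs placed at $[0,1)$ and $[1,2)$ are both covered, while at most two length-$1$ jobs fit disjointly in $[0,2)$. Then I would exhibit the profile $\sigma$ in which every job sits at the centered interval $[1/2,3/2)$; there every two jobs overlap, so the machine covers exactly one job and $\mathrm{val}(\sigma)=1$. The point is that $\sigma$ is an NE: the two components of $[0,2)\setminus[1/2,3/2)$ have length $1/2$ each, so wherever a player re-places its length-$1$ job it still meets $[1/2,3/2)$, hence every other job; the machine again covers a single job, and by the tie-breaking rule it still covers job $1$, so no utility changes. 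Since every profile has value at least $1$, we get $\min_{\sigma\in\E}\mathrm{val}(\sigma)=1$ and $\mathrm{PoA}(G_n)=2$, matching Theorem~\ref{th:poa} for $n\le 5$. (If one would rather not appeal to a tie-breaking rule, giving job $1$ weight $1+\epsilon$ makes the machine strictly prefer it and yields $\mathrm{PoA}(G_n)=2/(1+\epsilon)$, arbitrarily close to $2$.)

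\textbf{The $\mathrm{PoA}\ge\frac{n-1}{2+\epsilon}$ bound, $n\ge 5$.} Take $T=1$, one ``long'' job of length $1$ and weight $2+\epsilon$, and $n-1$ ``short'' jobs each of length $\frac1{n-1}$ and weight $1$. First I would note $\mathrm{val}(\mathrm{OPT})\ge n-1$, since the short jobs tile $[0,1)$ and can all be covered. Then I would define $\sigma$: place the long job at $[0,1)$ and stack all $n-1$ short jobs on one common interval, say $[0,\frac1{n-1})$. In $\sigma$ every two short jobs overlap and each overlaps the long job, so the machine's best schedule covers the long job alone, giving $\mathrm{val}(\sigma)=2+\epsilon$. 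The key step is that $\sigma$ is an NE. The long job is already covered, so only a short job $j$ could wish to deviate; but after $j$ moves to any interval $P$, the remaining $n-2$ short jobs still share a common interval, so the machine can cover at most one of them together with $j$, for a total of at most $1+1=2<2+\epsilon$. Hence the machine keeps covering only the long job, $j$ stays uncovered, and $\sigma$ is an NE. Since the long job can be covered in every profile, $\min_{\sigma\in\E}\mathrm{val}(\sigma)=2+\epsilon$, so $\mathrm{PoA}(G_n)\ge\frac{n-1}{2+\epsilon}$, which for $n\ge 5$ meets the upper bounds of Theorem~\ref{th:poa}.

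\textbf{Where the work is.} The $\mathrm{val}(\mathrm{OPT})$ estimates and the remark that the distinguished (respectively long) job is always coverable are routine. The real content in both constructions is the equilibrium check, i.e.\ verifying that \emph{no} re-placement of \emph{any} job by \emph{any} player raises that player's utility. In the first family this reduces to the elementary geometric fact that a length-$1$ interval cannot be slotted into a gap of length $1/2$; in the second it reduces to the observation that a deviating short job can ``rescue'' at most one other short job, because the untouched short jobs stay pairwise overlapping --- which is precisely why the long job's weight is $2+\epsilon$ rather than $2$, so that covering it strictly beats ``the deviating short job plus one other short job'' without using the tie-breaking rule. One small point to keep track of is that the second construction only matches the upper bounds of Theorem~\ref{th:poa} for $n\ge 5$; for $n\le 4$ it yields a ratio below those bounds, which is why the statement is restricted there.
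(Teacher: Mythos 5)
Your proposal is correct and follows essentially the same route as the paper: the first family is identical ($n$ unit jobs stacked on an interior unit interval with $T=2$), and the second is the paper's construction up to rescaling time (the paper uses $T=n-1$, a length-$T$ job of weight $2+\epsilon$, and $n-1$ unit jobs stacked on $[0,1)$), with the same key observation that a deviating short job can bring along at most one other short job, worth $2<2+\epsilon$. Your explicit handling of the machine's tie-breaking in the first construction is a small added care the paper leaves implicit.
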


\begin{proof}
For $n \ge 2$, the game $G_n$ consists of $n$ unit weight jobs of length $1$. Let $T=2$. In a possible NE profile, all the jobs are placed in $[t,t+1)$ for some $0<t<1$, thus $\mathrm{val}(NE)=1$, while $\mathrm{OPT}=2$. 

For $n \ge 5$, the game $G_n$ is defined as follows: Let $T=n-1$. Player $1$ has one job with length $p_1=T$ and $w_1=2$. For $i=2,\ldots,n$,  $i$ has one job with $p_i=w_i=1$. In the Nash equilibrium, player $1$ places its job on $[0,T)$ and all players $i=2,\ldots, n$ place their jobs on $[0,1)$. Job $1$ will be covered and even if job $i\neq 1$ deviates to a different interval the machine will still process job $1$ as the machine breaks ties in favor of lexicographically small subsets. In a possible social optimum, jobs $i=2,\ldots,n$ each choose interval $[i-2,i-1)$ and are covered for a weight of $n-1$. Hence, $\mathrm{PoA}(G_n) \geq (n-1)/2$.
\end{proof}

\subsection{Weight Proportional to Length}
For the restricted class for games in $\G_{prop} \cap \G_{single}$, we show that the PoA is bounded by a constant. Recall that $G \in \G_{prop}$ if for all $i \in \J$, we have that $w_j=p_j$. Thus, the goal is to cover as much of the interval as possible in $[0,T)$.

\begin{theorem}\label{th:w=p}
For every $G \in \G_{prop} \cap \G_{single}$, $\mathrm{PoA}(G)\le 3$.
\end{theorem}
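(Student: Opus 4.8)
The plan is to fix an NE profile $\sigma$ and compare the total covered length $\mathrm{val}(\sigma)$ against $\mathrm{val}(\mathrm{OPT})$, where in this regime covered weight equals covered length. Write $L = \mathrm{val}(\sigma)$ for the measure of the set of time points in $[0,T)$ that lie inside a covered interval. The complement, the idle portion of $[0,T)$, has measure $T - L$, and it decomposes into at most $n_{in}(\sigma)+1$ maximal idle intervals, where $n_{in}(\sigma) \le n$ is the number of covered jobs. The key NE condition is: no uncovered job $j$ fits into any single idle interval, i.e. $p_j > g$ for every idle gap length $g$; in particular every uncovered job is longer than the largest idle gap. I would also use the stronger local fact that an uncovered job cannot be relocated to sit disjointly from the two covered jobs flanking one of the idle gaps it is currently (partly) blocking — but the cleanest route is the gap argument plus the observation below.

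First I would bound the loss relative to a \emph{fractional} optimum: since each item has $w_j=p_j$, the best one can hope to cover is $\min\{T, \sum_j p_j\}$, so $\mathrm{val}(\mathrm{OPT}) \le \min\{T, \sum_j p_j\}$, and trivially $\mathrm{val}(\mathrm{OPT})\le T$. Hence it suffices to show $L \ge T/3$, or more precisely $L \ge \mathrm{val}(\mathrm{OPT})/3$. Split into two cases. Case (i): some single job $j$ has $p_j \ge T/3$. Then placing nothing else forces the machine to cover at least $j$ in $\sigma$ (an uncovered $j$ would mean the covered set has weight $\ge p_j\ge T/3$ anyway), so $L \ge T/3 \ge \mathrm{val}(\mathrm{OPT})/3$. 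Case (ii): every job has $p_j < T/3$. Then I claim the total idle time is at most $2L$, giving $T = L + (T-L) \le L + 2L = 3L$, hence $L\ge T/3 \ge \mathrm{val}(\mathrm{OPT})/3$. To prove idle time $\le 2L$: consider the covered jobs in left-to-right order $J_{i_1},\dots,J_{i_m}$ (by Claim~\ref{cl:single}-type reasoning the covered intervals are disjoint). Each maximal idle gap of length $g$ sits between two consecutive covered blocks (or at an end of $[0,T)$). Since the instance's optimum is achievable, there is an uncovered job witnessing stability only if the gap is "small"; more carefully, if a maximal idle gap has length $g \ge \max(p_{\text{left block}}, p_{\text{right block}})$ I would show the shortest uncovered job — which has length $< T/3$ — could be slid into that gap (it fits because $g$ is at least... ) — this is the step needing care. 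The robust version: charge each idle gap to the covered interval immediately to its right (the leftmost gap, if $[0,T)$ starts idle, is charged to the first covered interval); I must argue each gap length is at most twice the length of the interval it is charged to, i.e. $g \le 2 p_{i_k}$ for the covered job $i_k$ bounding it on the right, because otherwise an uncovered job would fit, or a covered job could be profitably relocated contradicting NE. Summing, $\sum \text{gaps} \le 2\sum_k p_{i_k} = 2L$.

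The main obstacle — and the place I expect the constant $3$ (rather than $2$) to come from — is establishing the per-gap bound $g \le 2 p_{i_k}$ from the NE conditions when there are no short jobs lying around to plug gaps; one must instead argue via a relocation of a \emph{covered} job (or of an uncovered job that is itself long but still shorter than an oversized gap) and rule out the possibility of a beneficial deviation, carefully handling the two boundary gaps at $0$ and $T$ and the case where a gap is flanked by only one covered block. I would organize this as a short lemma: in any NE of a game in $\G_{prop}\cap\G_{single}$, every maximal idle interval has length at most twice the length of an adjacent covered job (choosing the side appropriately), or else some job — covered or uncovered — has a beneficial move. Once that lemma is in place the counting above closes the argument, and a matching lower-bound example (a long job of length $\approx T$ with weight just above the total of many unit jobs, forcing the machine to keep the long job covered while the unit jobs could have tiled a length-$T$ interval) would show the constant cannot be pushed all the way to the knapsack-style bound, though tightness of exactly $3$ is not claimed in the statement so I would not pursue it.
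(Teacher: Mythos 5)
Your overall target is the same as the paper's -- reduce to showing that the total idle time in an NE is at most twice the total busy time, so that $\mathrm{OPT}\le T\le 3\,\mathrm{val}(\sigma)$ -- and your Case (i) (the machine always earns at least $\max_j w_j$) is fine. But the step you yourself flag as the main obstacle, the per-gap bound $g\le 2p_{i_k}$ for an adjacent covered job, is not just unproven: it is false, and neither of the escape hatches you offer exists. A covered player already receives its full weight $w_j=p_j$, so ``a covered job could be profitably relocated'' is never an NE violation (and the machine cannot move jobs); the only NE constraints on gaps come from uncovered players, and those relate gap lengths to the lengths of \emph{uncovered} jobs and to the covered mass met by a whole length-$p_i$ window, not to the covered job next door. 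Concretely, take $T=23.2$ and single-job colors of lengths $6,\,0.1,\,0.1,\,6,\,6,\,5.5$ (weights equal lengths), placed at $[0,6)$, $[6,6.1)$, $[11.1,11.2)$, $[11.2,17.2)$, $[17.2,23.2)$ and $[11.5,17)$ respectively. The machine covers everything except the job of length $5.5$; since every window of length $5.5$ must intersect one of the length-$6$ jobs (the region avoiding them has length $5.2$), no deviation of the uncovered player ever pays for the machine, so this is an NE, and all lengths are below $T/3$. Yet the maximal idle gap $[6.1,11.1)$ has length $5$, far more than twice \emph{either} adjacent covered job (length $0.1$). So even the ``choose the side appropriately'' version of your lemma fails; and even if it held, charging a gap to whichever side works can charge one covered job from both sides, which only yields idle $\le 4\,\mathrm{val}(\sigma)$, i.e.\ $\mathrm{PoA}\le 5$, not $3$.

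The repair has to be global rather than gap-local, and this is exactly what the paper's proof does. Fix any uncovered job $i$ and consider the deviation to $[0,p_i)$: writing the schedule as alternating idle/busy lengths $u_1,b_1,\ldots,u_k,b_k,u_{k+1}$ (each busy block a single covered job), machine optimality under $w=p$ forces the covered jobs meeting $[0,p_i)$, namely $b_1,\ldots,b_a$ where the point $p_i$ lands inside $b_a$, to satisfy $p_i<\sum_{j\le a}b_j$, while $p_i>\sum_{j<a}(u_j+b_j)+u_a$; subtracting gives $b_a>\sum_{j\le a}u_j$, i.e.\ a single busy block dominates \emph{all} idle time to its left. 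Iterating on the suffix after $b_a$, and using the mirror placement $[T-p_i,T)$ once $i$ no longer fits to the right, covers all idle intervals while charging each busy block at most twice, giving $\sum_j u_j<2\sum_j b_j$ and hence $\mathrm{PoA}(G)<3$. In short, gaps must be charged to the busy block that blocks a flush-left (or flush-right) placement of an uncovered job, not to the covered jobs adjacent to the gap; as written, your argument has a genuine gap at its central step.
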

\begin{proof}

%
%
%
 Let $G \in \G_{prop}$ and let $\sigma$ be an NE profile of $G$. If all the jobs are covered, then clearly, $\mathrm{val}(\sigma)=\mathrm{OPT}$. Thus, we assume below that at least one job is not covered. During the interval $[0,T)$, the machine alternates between being idle and busy in $\sigma$. Let $u_1, b_1,u_2,b_2,\ldots,u_k,b_k,u_{k+1}$ denote the lengths of the intervals in which the machine is idle and busy. If two jobs are covered one after the other we consider them as two busy intervals with an idle interval of length $0$ between them. We show that the total idle time is strictly less than two times the total busy time.

Consider the first idle interval $u_1$, and any non-covered job, $i$. Assume that job $i$ is placed in $[0,p_i)$.
     If the point $p_i$ is within an idle interval, then $\sigma$ is not an NE, since it is profitable for the machine to cover $p_i$ instead of the jobs in $[0,p_i)$. Therefore, the point $p_i$ must be within a busy interval, say $b_a$ (See Fig.~\ref{fig:prop3}). Since $\sigma$ is an NE, placing job $i$ in $[0,p_i)$ is not beneficial. Specifically, if the machine covers job $i$, its added profit will be $p_i$ and its lost profit will be $\sum_{j=1}^a b_j$. Therefore, $p_i < \sum_{j=1}^a b_j$. The choice of $a$ is such that $p_i > \sum_{j=1}^{a-1} (u_j+b_j) +u_a$. Combining the two inequalities, we get that $b_a > \sum_{j=1}^{a} u_j$. 
    
    The same argument can be applied on the suffix of the schedule, starting from $u_{a+1}$. If $\sum_{j=1}^a (u_j+b_j)+p_i\leq T$, that is, job $i$ can fit after $b_a$, then we repeat the above procedure to the suffix of the schedule starting after $b_a$.  If $\sum_{j=1}^a (u_j+b_j)+p_i>T$, then we assume that job $i$ is placed in $[T-p_i,T)$. If point $T-p_i$ is within an idle interval, then $\sigma$ is not an NE. Therefore, the point $T-p_i$ must be in a busy interval, say $b_{a'}$, where $a'\leq a$ (because job $i$ does not fit after $b_a$). Using the same argument as above implies that $b_{a'}>\sum_{j=a'+1}^{k+1} u_j$.

    Adding the inequalities on the suffix and the prefix of $\sigma$  we get that $2\sum_{j=1}^k b_j > \sum_{j=1}^{k+1} u_j$.

\begin{figure}[ht]
\centering
\includegraphics[width=0.9\textwidth]{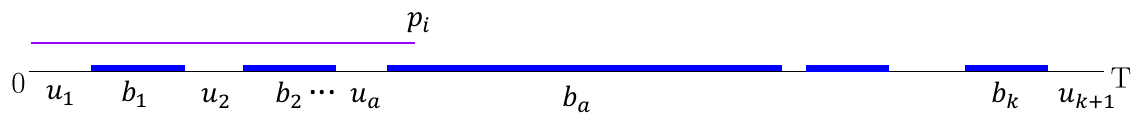}
\caption{\normalfont{
 The stability of $\sigma$ implies that point $p_i$ is within a busy interval.}}
\label{fig:prop3}
\end{figure}

The interval $T$ is partitioned such that $T = \sum_{j=1}^k b_j + \sum_{j=1}^{k+1} u_j$. Hence, $3\mathrm{val}(\sigma)=3\sum_{j=1}^k b_j >\sum_{j=1}^k b_j + \sum_{j=1}^{k+1} u_j=T\ge \mathrm{OPT}$, implying that $\mathrm{PoA}(G) < 3$.
\end{proof}

 \section{Arbitrary Number of Jobs per Color}

For games in $\G_{single}$, we have shown that an NE always exists, the price of stability is $1$ and the price of anarchy is $(n-1)/2=(c-1)/2$ for $n\geq 5$. In general, when several jobs may have the same color, the game becomes less stable, and the equilibrium inefficiency increases. 
Since most of the challenges arise already in games with two players, we first analyze this setting.

 \subsection{Two Players}

As shown in the introduction, there exists a game with $c=2$ that has no NE profile. We strengthen this negative result and show that NE is not guaranteed to exist even in the class $\G_{prop}$ where jobs' weights are proportional to their lengths.


\begin{proposition}
\label{ob:noNEprop}
There exists a game $G \in \G_{prop}$ with $c=2$ that has no NE profile.
\end{proposition} 

\begin{proof}
Consider a game in which $T=3$, the set $J_1$ includes jobs of lengths $p_1=3$ and  $p_2=1$, and the set $J_2$ includes $4$ jobs of length $1 - \epsilon$ (See Fig.~\ref{fig:noNEProp}).

\begin{figure}[ht]
\centering
\includegraphics[width=0.7\textwidth]{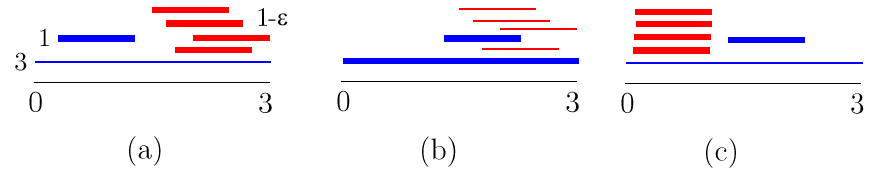}
\caption{\normalfont{
A game $G \in \G_{prop}$ with $c=2$ that has no NE.}}
\label{fig:noNEProp}
\end{figure}
If job $2$ does not overlap with any of the four jobs of $J_2$ then the machine will service all $5$ short jobs, and the players' utilities are $(1, 4(1-\epsilon))$. If job $2$ overlaps with at least one job of $J_2$ then the machine can cover at most $4$ short jobs, of total weight at most $4 -3\epsilon$, and will therefore service only jobs of $J_1$. The players' utilities are $(4,0)$.  Therefore, the game has no NE - player $1$ will always move job $2$ such that it overlaps a job from $J_2$, and player $2$ will move its jobs such that they do not overlap job $2$. Since the jobs of $J_2$ can overlap each other, player $2$ always has a valid move. 
\end{proof}

Next, we show that the problem of deciding whether a game has an NE, as well as the simpler problem of best-response computation are both NP-hard.

\begin{theorem}
\label{thm:SNPHdecide}
For $c=2$, deciding whether a game has an NE is strongly NP-hard.
\end{theorem}
\begin{proof}
We present a reduction from the $3$-Partition problem. The input of the $3$-Partition problem consists of a set $A$ of $3k$ integers $\{a_1, \dots, a_{3k}\}$ and an integer $B$, such that $\sum a_i = kB$ and for all $1 \le i \le 3k$, $B/4 < a_i < B/2$. The goal is to partition $A$ into $k$ triplets, each summing exactly to $B$. This problem is NP-hard in the strong sense. 

Given $A$, we construct a game with $c=2$. Let $T=B$. The set $J_1$, controlled by player 1, consists of $3k$ {\em partition jobs} with lengths $p_i = a_i$ and weights $w_i = 1$, as well as one job with length $p_{3k+1} = T$ and weight $w_{3k+1} = 1$. Player 2 controls a single job $j_0$ with $p_0<\frac{1}{4}$ and weight $w_{0} = k+\frac 1 2$.

Assume that a $3$-Partition of $A$ exists. Player $1$ can arrange the $3k$ jobs into $k$ triplets. Since each triplet sums to $T$, player $1$ can place these $k$ triplets such that the three jobs composing each triplet perfectly cover the interval $[0, T)$. Job $3k+1$ is placed in $[0,T)$. 
For every $t \in [0, T)$, the machine sees jobs of $J_1$ with a total weight of $k+ 1$. Independent of where player $2$ places $j_0$, the machine will choose to cover all the jobs of $J_1$. The players' utilities are $(3k+1, 0)$. Player $1$ is clearly stable, and player $2$ cannot place job $j_0$ such that it will be covered. Therefore, this is an NE.  

Assume that a partition of $A$ does not exist. We show that no NE exists, distinguishing between profiles in which job $j_0$ is covered or not. Let $\sigma$ be a profile in which job $j_0$ is not covered. Consider the assignment of the jobs of $J_1$. Clearly, if $j_0$ is not covered, then all the jobs of $J_1$ are covered and processed by the machine. 
\begin{claim}
If $A$ has no $3$-partition, then there must be an interval of length $1/4$ along which the machine processes, in addition to job ${3k+1}$, at most $k-1$ jobs of $J_1$.
\end{claim}
\begin{proof}
Assume towards contradiction that for every interval of length $1/4$ there are at least $k$ jobs of $J_1$ that intersect that interval. In particular, at least $k$ jobs intersect with the interval $[0,1/4]$, at least $k$ jobs intersect with the interval $[T/2-1/8,T/2+1/8]$, and at least $k$ jobs intersect with the interval $[T-1/4,T]$. Observe that a job that intersects $[0,1/4]$ cannot intersect $[T/2-1/8,T/2+1/8]$ or $[T-1/4,T]$, because $a_i\leq T/2-1$. This implies that we can partition $J_1$ into three sets of $k$ jobs: $J_{1,A}=\{j\in J_1\mid\sigma_i\cap[0,1/4]\}$, $J_{1,B}=\{j\in J_1\mid\sigma_i\cap[T/2-1/8,T/2+1/8]\}$, and $J_{1,C}=\{j\in J_1\mid\sigma_i\cap[T-1/4,T]\}$.

Let $j_a\in J_{1,A}$. Then there exists some job $j_b\in J_{1,B}$ that overlaps with $j_a$ or begins at most $1/4$ after $j_a$ ends. If not, there is an interval of size $1/4$ with less than $k$ jobs. Also, there exists some job $j_c\in J_{1,C}$ that overlaps with $j_b$ or begins at most $1/4$ after $j_b$ ends. Observe that this triplet sums to more than $T-1$. Removing these three jobs, we can repeat this argument for the remaining jobs implying that we must be able to partition the jobs into triplets each sums to more than $T-1$, so there must be a $3$-partition.
\end{proof}

Let $[t,t+\frac 1 4]$ be an interval along which the machine processes, in addition to job ${3k+1}$, at most $k-1$ jobs of $J_1$. Assume that player $2$ places job $j_0$ in $[t,t+\frac 1 4]$. The machine will benefit from covering  $j_0$ - increasing its profit by $k+\frac 1 2$ and giving away profit at most $k$. Therefore, $\sigma$ is not an NE.

Let $\sigma$  be a profile in which job $j_0$ is covered. Assume it is placed in interval $[t,t+\frac 1 4)$. It must be that job ${3k+1}$ is not covered. By placing at least $k$ jobs such that they overlap $[t,t+ \frac 1 4)$, Player $1$ can cause the machine to cover all the jobs of $J_1$. Therefore, $\sigma$ is not an NE. 
\end{proof}

\begin{theorem}
\label{thm:BRD2Shard}
For $c=2$, computing a best-response of a player is strongly NP-hard. 
\end{theorem}
\begin{proof}
We present a reduction from the $3$-Partition problem.  Given a set $A$ of $3k$ integers $\{a_1, \dots, a_{3k}\}$ and an integer $B$, such that $\sum a_i = kB$, and for all $1 \le i \le 3k$, $B/4 < a_i < B/2$, construct the following game with $c=2$. 
Let $T=2k-1$.
Player $1$ controls the set $J_1$ of $3k+1$ jobs, where for $0 \le i \le 3k$, $p_i=1$ and $w_i=a_i$. The last job in $J_1$ has length $T$ and weight $\frac 1 2$. 
Player $2$ controls $k$ unit-length jobs of weight $B$.

For every $1 \le \ell \le k$, let $[2(\ell-1), 2\ell-1)$ be denoted {\em the $\ell$-th odd interval}.
That is $[0,1)$ is the first odd interval, $[2,3)$ is the second odd interval, and so on, until $[2k-2, 2k-1)$ which is the $k$-th odd interval.
Let $\sigma$ be a profile in which, for $1 \le \ell \le k$, the $\ell$-th job of $J_2$ is placed in the $\ell$-th odd interval. We show that player $1$ has a best-response that leads to utility $kB+\frac 1 2$ if and only if a $3$-partition of $A$ exists.
Since job $3k+1$ has length $T$, it must be placed in $[0,T)$. 
Since the partition jobs have length $1$, each of them can overlap at most one odd interval. 

If a $3$-partition of $A$ exists, then player $1$ can place a triplet of jobs of total weight $B$ in every odd interval. In this schedule, there are jobs of total weight $B$ from each color on each odd interval. The long job of $J_1$ will cause the machine to process all the jobs of $J_1$, for a total revenue of $kB+\frac 1 2$. This is clearly a best-response for $J_1$.

If a partition of $A$ does not exist, then the total weight of jobs from $J_1$ overlapping at least one odd interval is at most $B-\frac 1 2$. To see this, suppose, by way of contradiction, that the total weight of jobs from $J_1$ overlapping each odd interval is more than $B-\frac 1 2$, that is, at least $B$. This implies that each odd interval is overlapped by at least and thus exactly three partition-jobs, but this would imply the existence of a $3$-partition. Therefore, the machine will process at least one of the jobs of $J_2$ and will not process job $3k+1$ of $J_1$.
Thus, the maximum utility player $1$ can gain if a partition does not exist is $kB$. 
\end{proof}



Finally, we provide tight analysis of the equilibrium inefficiency in a $2$-player game.  
\begin{theorem} 
\label{thm:C2PoS2}
For $c=2$, $\mathrm{PoA}(G)\leq 2$. Moreover, there exists a game $G$ with $c=2$ and $\mathrm{PoA}(G)=2$, and for every $\epsilon>0$, there exists a game $G'$ with $c=2$ and $\mathrm{PoS}(G')=2-\epsilon$. 
\end{theorem}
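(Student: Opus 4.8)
The plan is to prove the three claims of Theorem~\ref{thm:C2PoS2} separately: the PoA upper bound of $2$ for $c=2$, a matching PoA lower bound, and a PoS lower bound approaching $2$.

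For the upper bound $\mathrm{PoA}(G)\le 2$, fix an NE profile $\sigma$ and the social optimum $\mathrm{OPT}$. Since there are only two colors, $\mathrm{val}(\mathrm{OPT}) = O_1 + O_2$ where $O_i$ is the total weight of jobs of color $i$ covered in $\mathrm{OPT}$, and $\mathrm{val}(\sigma) = u_1(\sigma) + u_2(\sigma)$. The key observation is a unilateral-deviation argument: by Claim~\ref{cl:single}, $\mathrm{OPT}$ can be taken to cover each color in a single contiguous interval, so in $\mathrm{OPT}$ the jobs of color $1$ occupy some interval of length $\ell_1$ and those of color $2$ an interval of length $\ell_2$ with $\ell_1 + \ell_2 \le T$. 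Player $i$ can always deviate (while holding $\sigma_{-i}$ fixed) by placing its covered-in-$\mathrm{OPT}$ jobs of color $i$ consecutively at one end of $[0,T)$; I want to argue that one of the two ends leaves enough room so that, against the other player's fixed placement, the machine's optimal response recovers at least the $\mathrm{OPT}$ value $O_i$ for color $i$ — or else that the machine was already covering that much. Since $\sigma$ is an NE, $u_i(\sigma)$ is at least what this deviation would yield; summing over $i=1,2$ and using that each $O_i$ is recoverable by at least one of the two players gives $u_1(\sigma)+u_2(\sigma) \ge O_1$ (say, from player 1's deviation if it is the one with the free end) — more carefully, the standard trick is: at least one of the two ends of the schedule in $\sigma$ is "large", and the player whose $\mathrm{OPT}$-interval fits there can guarantee its full $O_i$; for the other color one shows the machine in $\sigma$ already earns at least the remaining amount, for a total of at least $\mathrm{val}(\mathrm{OPT})/2$. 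The main obstacle is making this deviation argument airtight: the machine's best response to the deviation is not under the deviating player's control, and the other player's jobs in $\sigma$ may block parts of $[0,T)$, so one must carefully pick which end to deviate to and bound how much the machine must give up.

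For the tight example with $\mathrm{PoA}(G)=2$: take $T=2$ and a simple instance, e.g. player $1$ has a single job of length $1$ and weight $1$, player $2$ has a single job of length $1$ and weight $1$ (this is just $G_2 \in \G_{single}$ from Theorem~\ref{thm:PoAtight}, which already gives PoA $=2$); I will either cite that or re-exhibit it, noting the NE where both jobs overlap yields $\mathrm{val}=1$ while $\mathrm{OPT}=2$.

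For the PoS lower bound $\mathrm{PoS}(G')\ge 2-\epsilon$: I need a $2$-player game in which \emph{every} NE has value at most $\frac{\mathrm{val}(\mathrm{OPT})}{2-\epsilon}$. The idea is to combine the non-existence gadget flavor with a forced-bad-equilibrium construction: design the instance so that the only stable profiles are ones where one player "wins" and the other is completely shut out, with the winner's weight being roughly half of $\mathrm{val}(\mathrm{OPT})$. Concretely, one expects something like $T$ large, player $1$ owning a long job of weight $W(1+\epsilon')$ spanning $[0,T)$ together with small "filler" jobs, and player $2$ owning jobs of total weight $W$ that could all be covered in the social optimum but whose coverage the machine will abandon in favor of player $1$'s configuration in any equilibrium; one then checks that the profile where player $2$ spreads its jobs out is \emph{not} an NE (player $1$ deviates to block), while the profile where player $1$'s long job is covered \emph{is} an NE, with value $W(1+\epsilon')$ versus $\mathrm{OPT}$-value $\approx 2W$. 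The hard part here is verifying that there is no "good" NE hiding somewhere — i.e., doing an exhaustive case analysis over the qualitatively different strategy profiles to confirm the best NE really is the bad one — and tuning the weights so the ratio is exactly $2-\epsilon$ rather than something weaker.
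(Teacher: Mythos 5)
Your proposal has genuine gaps in two of the three parts. For the upper bound $\mathrm{PoA}(G)\le 2$, your unilateral-deviation scheme is both shakier and more complicated than necessary, and as sketched it does not go through: the premise that ``at least one of the two ends of the schedule in $\sigma$ is large'' can simply fail (the interval $[0,T)$ may be fully occupied by covered jobs), and even when a player relocates its $\mathrm{OPT}$-jobs into an idle region, the machine re-optimizes globally, so the deviating player cannot by itself guarantee utility $O_i$ -- you would need a careful argument about the machine's response and tie-breaking that you do not supply. The idea you are missing is much simpler and does not use the equilibrium property or deviations at all: with $c=2$ the machine can always configure itself to a single color over the entire horizon $[0,T)$, which covers \emph{every} job of that color regardless of placement. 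Hence for any profile $\sigma$, $\mathrm{val}(\sigma)\ge \max\{w(S_1),w(S_2)\}\ge \tfrac12\bigl(w(S_1)+w(S_2)\bigr)\ge \tfrac12\,\mathrm{val}(\mathrm{OPT})$, and the bound follows immediately (this is the same observation that gives $\mathrm{PoA}\le c$ in general).

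Your $\mathrm{PoA}=2$ example (two unit jobs overlapping, $T=2$) matches the paper and is fine. However, the $\mathrm{PoS}=2-\epsilon$ part is only a plan, not a proof: you never exhibit a concrete instance, and you yourself flag that the essential step -- verifying that \emph{no} near-optimal NE exists -- is unresolved, as is the tuning of weights to hit the ratio $2-\epsilon$ exactly. The paper does this with a tiny instance: $T=2$, player $1$ owns a job of length $2$ and weight $\delta$ plus a unit job of weight $1$, player $2$ owns a unit job of weight $1$, with $\delta$ chosen so that $\tfrac{2}{1+\delta}=2-\epsilon$. The optimum ($=2$) covers the two unit jobs, but it is not an NE because player $1$ can shift its unit job to $(t,t+1)$ with $0<t<1$, forcing the machine to cover all of $S_1$ for $1+\delta$; one then checks that every NE has exactly this structure and value $1+\delta$. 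Without producing such a construction and carrying out the ``no good NE'' verification, your argument for the price-of-stability bound is incomplete.
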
 
\begin{proof}
Let $\sigma$ be an NE. It must be that $\mathrm{val}(\sigma) \ge \frac 1 2 (w(J_1)+w(J_2))$, because the machine can always fully cover $J_1$ or $J_2$ gaining profit at least $\max(w(J_1),w(J_2))$. Since $\mathrm{val}(\mathrm{OPT}) \le w(J_1)+w(J_2)$, we have that $\mathrm{PoA}(G) \le 2$. Also, $\mathrm{PoA}(G)=2$ is achieved in a game $G$ in which $T=2$, and each set includes a single unit-weight job of length $1$. If both players locate their job in $[t, t+1)$ for some $0<t<1$, we get an NE of profit $1$. In a socially optimum solution, the jobs are assigned in $[0,1)$ and $[1,2)$ for a total profit of $2$.


Given $0<\epsilon<1$, select $\delta$ such that $\epsilon=\frac{2\delta}{1+\delta}$. Consider a game $G'$ in which $T=2$, the set $J_1$ includes two jobs where $p_1=2, w_1=\delta$ and $p_2=1, w_2=1$, and $J_2$ includes a single job where $p_3=1$ and $w_3=1$.

\begin{figure}[ht]
\centering
\includegraphics[width=0.4\textwidth]{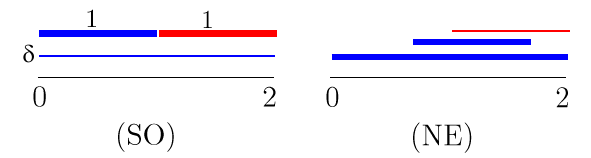}
\caption{\normalfont{
A game with $c=2$ and PoS$= \frac {2}{1+\delta} =2-\epsilon$.}}
\label{fig:PoSis2}
\end{figure}
Fig.~\ref{fig:PoSis2} presents a socially optimum profile and an NE profile. In the social optimum, job $2$ is covered in $[0, 1)$, and job $3$ is covered in $[1, 2)$.  The players utilities are $(1,1)$. This schedule is not an NE, since player $1$ can move job $2$ to $(t,t+1)$ for any $0<t<1$. The machine will then processes the two jobs of $J_1$ and the players' utilities will be $(1+\delta, 0)$. The only NE profiles of $G'$ are those in which player $1$ locates job $1$ in $(t,t+1)$ for any $0<t<1$. The profit of such a schedule is $1+\delta$, while $\mathrm{val}(\mathrm{OPT})=2$. Thus, $\mathrm{PoS}(G') = \frac {2}{1+\delta}= 2 -\epsilon$. 
\end{proof}


\subsection{Equilibrium Inefficiency - Arbitrary Number of Players}

We show that both the price of anarchy and the price of stability are linear in the number of colors.

\begin{theorem}\label{thm:poapos}
For all $G$, $\mathrm{PoA}(G)\leq c$ and for every $c>2$ and $\epsilon'>0$, there exists a game $G$ with $\mathrm{PoS}(G)\geq \frac{c}{2}-\epsilon'$.
\end{theorem}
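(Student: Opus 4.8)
The plan is to prove the two claims separately. For the upper bound $\mathrm{PoA}(G) \le c$, I would argue as in the two-player case (Theorem \ref{thm:C2PoS2}): in any profile $\sigma$, the machine can always choose to configure itself to a single color $i \in \C$ for all of $[0,T)$ and cover all the jobs of $S_i$ that fit, or more simply, it can cover the single most profitable job, but the cleanest bound comes from noting that the machine's optimal configuration is at least as good as "commit to the color $i$ maximizing $w(S_i)$ and cover all of $S_i$ that can be scheduled"; since in any NE the machine plays optimally, $\mathrm{val}(\sigma) \ge \max_{i \in \C} w(\text{covered jobs of } S_i)$. The subtlety is that a player's jobs need not all fit in $[0,T)$, so I would instead use the following: in an NE, player $i$ could deviate to stack all of its jobs overlapping at a common point, forcing the machine to either cover the best single color or lose; more carefully, since $\mathrm{val}(\mathrm{OPT}) \le \sum_{i \in \C} w(S_i^{\mathrm{OPT}}) \le \sum_{i\in\C} u_i(\mathrm{OPT})$ and each $u_i(\sigma) \ge$ (the utility player $i$ gets by best-responding, which is at least $1/c$ of... ) — the honest route is: $\mathrm{val}(\mathrm{OPT}) \le \sum_{i\in\C} w(S_i)$, and $\mathrm{val}(\sigma) \ge \max_i (\text{weight of } S_i\text{-jobs covered when } i \text{ best-responds alone}) $. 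I would make this precise by observing that for the color $i^\star$ achieving $\max_i u_i(\mathrm{OPT})$, player $i^\star$ in the NE $\sigma$ has the option of replaying its OPT-strategy, and the machine can then cover at least those jobs, so $\mathrm{val}(\sigma) \ge u_{i^\star}(\sigma) \ge u_{i^\star}(\mathrm{OPT}) \ge \mathrm{val}(\mathrm{OPT})/c$. This gives $\mathrm{PoA}(G) \le c$.

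For the lower bound on the price of stability, I would construct, for each $c > 2$ and $\epsilon' > 0$, a game whose unique NE (or every NE) has value at most roughly $2/c$ times the optimum. The natural template generalizes Example \ref{ex:inexis} and Observation \ref{ob:noNEprop}: have one "dominant" player whose jobs, when stacked, beat everything, and $c-1$ "victim" players each controlling a job that contributes to a large social optimum but gets wiped out in equilibrium. Concretely I would try: $T$ large, player $1$ controls a long job of length $T$ and tiny weight plus many short jobs engineered so that player $1$'s best response always overlaps the victims; players $2,\dots,c$ each control jobs that in OPT tile $[0,T)$ for total value $\approx \mathrm{val}(\mathrm{OPT})$, but in any NE, player $1$'s deviation forces the machine to abandon all but a $2/c$ fraction. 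The design must ensure a NE actually exists (unlike the non-existence examples), which I would handle by giving player $1$ enough "filler" jobs that its stacking move is itself a stable configuration for everyone.

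The main obstacle is the price-of-stability construction: I need a game that simultaneously (a) has a Nash equilibrium — so I cannot reuse the non-existence gadgets directly — and (b) has \emph{every} NE of value at most $(2/c + o(1))\,\mathrm{val}(\mathrm{OPT})$, which requires ruling out all good equilibria, not just exhibiting one bad one. The trick will be to make the "bad" configuration forced: set up the dominant player's incentives so that in any profile where the social value is high (i.e., many victim jobs are covered), the dominant player has a beneficial deviation, while the profile where only $\approx 2$ colors' worth of weight is realized is stable for all players. I expect the weights to be tuned with a small parameter $\delta$ (as in Theorem \ref{thm:C2PoS2}) so that $\mathrm{PoS} = \frac{c}{2} \cdot \frac{1}{1+\delta} = \frac c2 - \epsilon'$ for appropriate $\delta = \delta(\epsilon',c)$.
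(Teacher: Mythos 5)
There are two genuine gaps. First, your final argument for $\mathrm{PoA}(G)\le c$ rests on the inequality $u_{i^\star}(\sigma)\ge u_{i^\star}(\mathrm{OPT})$, justified by saying that player $i^\star$ can replay its $\mathrm{OPT}$-strategy ``and the machine can then cover at least those jobs.'' That step is false: after the deviation the machine re-optimizes against $\sigma_{-i^\star}$, where the other players' jobs are placed non-cooperatively, and it may cover none of $i^\star$'s jobs; the Nash condition only gives $u_{i^\star}(\sigma)\ge u_{i^\star}(\sigma^{\mathrm{OPT}}_{i^\star},\sigma_{-i^\star})$, not a comparison with $u_{i^\star}(\mathrm{OPT})$. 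Indeed, the paper's own price-of-stability instance (Theorem~\ref{thm:poapos}) is a counterexample: the player with the largest $\mathrm{OPT}$-utility has utility $0$ in the bad equilibrium. Ironically, you had the correct and much simpler argument in hand and abandoned it because of a misconception: there is no issue of a color's jobs ``not fitting,'' since every job's interval lies in $[0,T)$ and same-color jobs are processed simultaneously, so the machine can always configure itself to a single color for all of $[0,T)$ and cover \emph{all} of that color's jobs. Hence in any profile $\mathrm{val}(\sigma)\ge\max_i w(S_i)\ge\frac{1}{c}\sum_i w(S_i)\ge\mathrm{val}(\mathrm{OPT})/c$, which is exactly the paper's one-line proof (and does not even use the players' equilibrium condition, only the machine's optimality).

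Second, for the lower bound $\mathrm{PoS}\ge\frac{c}{2}-\epsilon'$ you only describe a template, and the part you correctly identify as the main obstacle---exhibiting a game that has an NE at all, while \emph{every} NE has value roughly $\frac{2}{c}\,\mathrm{val}(\mathrm{OPT})$---is left unresolved. The paper's construction ($T=2c+1$; each player one unit job of weight $1$; player $1$ additionally a length-$T$ job of weight $1+\epsilon$; player $2$ additionally $c+2$ light jobs of lengths $2,\dots,c+3$ and weight $\frac{\epsilon}{c+3}$ each) is delicately tuned: the auxiliary light jobs are deliberately given to a player \emph{other than} the owner of the long job, because they are the destabilizing device in the proof that no good NE exists (if the long job is uncovered, all unit jobs must be covered, leaving a gap that lets player $2$ profitably overlap a covered unit job, after which that displaced player deviates again). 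Your sketch instead gives the filler jobs to the dominant player to ``stabilize'' its stacking move, which is a different mechanism, and neither the existence of the bad NE nor the nonexistence of good NEs is verified; as stated, the second half of the theorem is not proved.
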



\begin{proof}
 Let $\sigma$ be an NE. Since the machine can always fully cover all jobs of the most profitable color, $\mathrm{val}(\sigma) \ge \frac 1 c \sum_{i=1}^c w(J_i)$. Since $\mathrm{val}(\mathrm{OPT}) \le \sum_{i=1}^c w(J_i)$, we have that $\mathrm{PoA}(G) \le c$.

For the lower bound on the price of stability, Given $c>2$ and $\epsilon'>0$, let $T=2c+1$ and let $\epsilon$ be a small number such that $\frac{c+\frac{c+1}{c+3}\epsilon}{2+\epsilon} = \frac c 2 -\epsilon'$. The game $G$ is defined as follows. Every player $i=1,\ldots,c$ has one job with $p_i=w_i=1$. Player $1$ has, in addition, one job with length $p_{1'}=T$ and weight $w_{1'}=1+\epsilon$. 
Player $2$ has, in addition, $c+2$ light jobs of lengths $2,\ldots,c+3$ and weight $\frac{\epsilon}{c+3}$ each.
In a possible NE profile, $s_0$ (see Fig.~\ref{fig:pos}(a), where $c=4$), player $1$ puts its long job in $[0,T)$ and all players $i=1,\ldots, c$ put their unit job in $[0,1)$. 
Player $2$ puts each job of length $t$ in $[0, t)$. 
The two jobs of player $1$ will be covered, for a total profit of $2+\epsilon$. Player $1$ is clearly stable. If player $i >2$ deviates to a different interval the machine will still process the two jobs of player $1$ as $2 + \epsilon \geq 2 + \frac{c + 2}{c+3} \epsilon$. 
Since the total weight of the light jobs of player $2$ is less than $\epsilon$, player $2$ does not have a beneficial migration as well.

\begin{figure}[ht]
\centering
\includegraphics[width=\textwidth]{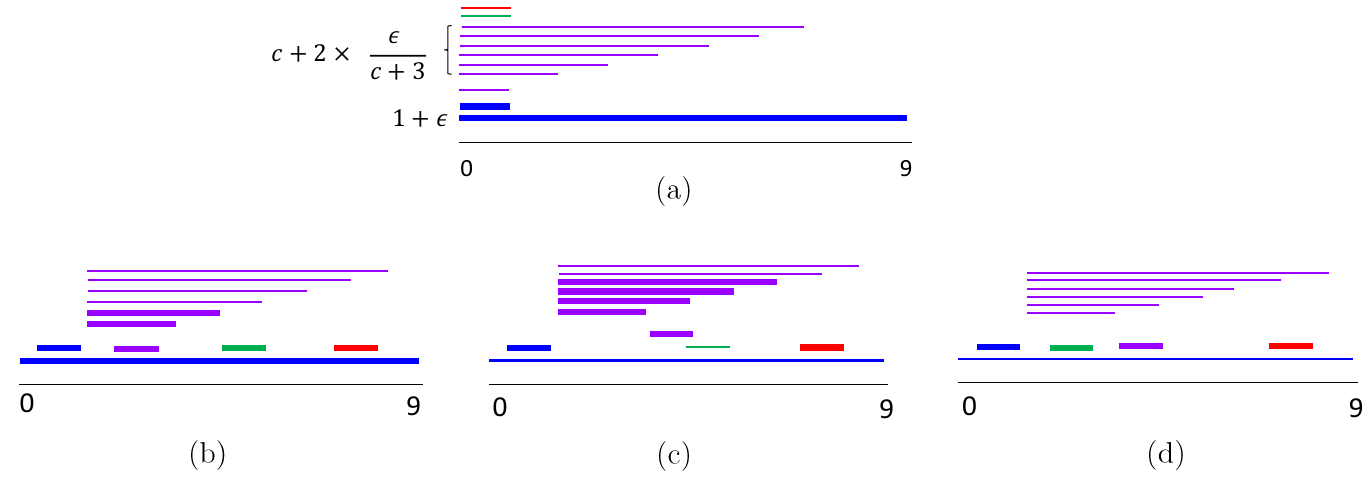}
\caption{\normalfont{
A game with $c=4$ and PoS$=2-\epsilon'$. Bold intervals are covered.
(a) an NE profile. (b) all unit jobs and some jobs of $J_2$ are covered. (c) a profile resulting from a beneficial deviation of player $2$. (d) a best-response of player $3$; once again, all unit jobs are covered.}}
\label{fig:pos}
\end{figure}

We show that every NE has profit $2+\epsilon$. Consider a profile $\sigma$. If the long job of $J_1$ is covered, then only jobs of $J_1$ are covered and the profit is $2+\epsilon$. If the long job of $J_1$ is not covered, then we show that $\sigma$ is not an NE. Since $T=2c+1$, all the unit-jobs are covered. Indeed, a player that controls a non-covered unit-job can place it on an idle unit slot or overlap (and be preferred) over light jobs of $J_2$. Let $x>0$ be the length of the longest interval in $\sigma$ that has no unit job of $J_j$, $j \neq 2$. 
If all the unit jobs are covered, then $c-1$ out of the $T$ available slots are busy processing unit jobs not in $J_2$, implying that $1 \le x \le T-(c-1) = c+2$. 

Let $j$ be a player whose unit-job is covered adjacent to the interval of length $x$. 
Player $2$ can deviate such that its unit job overlaps $j$ and all the light jobs of length at most $x+1$ are covered (see Fig.~\ref{fig:pos}(c)). Thus, $\sigma$ is not an NE.
The resulting schedule is not an NE as well, since job $j$ is non-covered, and as mentioned above, if the long job of $J_1$ is not covered, then all unit-jobs are.

In a social optimum solution, all the jobs except for the long job of $J_1$ and the longest light job of $J_2$ are covered. Hence, $\mathrm{PoS}(G) = \frac{c+\frac{c+1}{c+3}\epsilon}{2+\epsilon} = \frac c 2 -\epsilon'$.  \end{proof}



\subsection{Unit-length Jobs} 

In this section we analyze games with unit-length jobs. For this class, we provide positive results. Specifically, some optimal solution is an NE and the PoA is bounded by a constant less than $3$. 
\begin{theorem}
\label{thm:unitPoS}
For every game $G \in \G_{unit}$, $G$ has an NE profile and $\mathrm{PoS}(G)=1$.  
\end{theorem}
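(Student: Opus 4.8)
\textbf{Proof plan for Theorem~\ref{thm:unitPoS}.}

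The plan is to exhibit an explicit socially optimal profile that is also an NE; this immediately yields both claims, since $\mathrm{PoS}(G)=1$ follows from the existence of an optimal profile which is an equilibrium, and the existence of an NE is then a corollary. First I would invoke Claim~\ref{cl:single}: there is a social optimum in which each color $i$ is processed by the machine during a single contiguous interval. Since all jobs have unit length, the interval devoted to color $i$ has integer length $k_i$, namely the number of covered jobs of color $i$, and the covered jobs of color $i$ are (w.l.o.g.) the $k_i$ heaviest jobs in $S_i$. So a social optimum is described by a vector $(k_i)_{i\in\C}$ with $\sum_i k_i \le \floor{T}$, maximizing $\sum_i W_{k_i}(i)$ where $W_{k}(i)$ is the weight of the $k$ heaviest jobs of $S_i$.

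Next I would build the candidate profile $\sigma$ from such an optimal vector: concatenate the color blocks along $[0,T)$, so color $i$ occupies $[\,\sum_{j<i}k_j,\ \sum_{j\le i}k_j\,)$, and within color $i$'s block place its $k_i$ heaviest jobs one per unit slot and stack all remaining (lighter) jobs of $S_i$ somewhere inside that same block (e.g.\ all on the first unit slot of the block, or anywhere overlapping a covered job of the same color). The machine's optimal response covers exactly the $\sum_i k_i$ block jobs, so $\mathrm{val}(\sigma)=\mathrm{val}(\mathrm{OPT})$. It remains to check that no player $i$ has a beneficial deviation. The key observation is the exchange/greedy property already used for Claim~\ref{cl:single} and Theorem~\ref{th:SOconst}: if player $i$ could, by re-placing its jobs, induce the machine to cover a set of its jobs of total weight strictly more than $W_{k_i}(i)$, then in particular it covers some number $k_i' $ of $i$'s jobs with $W_{k_i'}(i) > W_{k_i}(i)$, hence $k_i' > k_i$; but the machine covering $k_i'$ unit jobs of color $i$ and keeping its profit at least unchanged on the rest means there is an alternative global assignment (replace color $i$'s block of length $k_i$ by one of length $k_i'$, shrinking the other colors' coverage only as forced) of value at least $\mathrm{val}(\mathrm{OPT}) - W_{k_i}(i) + W_{k_i'}(i) > \mathrm{val}(\mathrm{OPT})$ — wait, that over-counts, so the honest argument is: since $\sigma_{-i}$ is fixed, the other colors occupy at most $\sum_{j\ne i}k_j$ unit slots in the machine's new response, leaving at least $\floor{T}-\sum_{j\ne i}k_j = k_i$ slots; if the machine covers $k_i'>k_i$ jobs of color $i$ it must be displacing covered jobs of other colors, and a careful accounting (the displaced weight is at least what $i$ gains beyond $W_{k_i}(i)$, since the optimal vector $(k_j)$ is a maximizer) shows the machine's total profit does not increase, so the machine need not grant $i$ this gain, i.e.\ $u_i$ does not rise. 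I would formalize this by directly comparing against optimality of $(k_j)_j$.

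The main obstacle I anticipate is precisely this last step: a deviation of player $i$ changes where $i$'s jobs sit but leaves $\sigma_{-i}$ fixed, and the machine then re-optimizes globally, so I cannot simply appeal to "the machine processes each color in one interval" for the post-deviation schedule. I would handle it by arguing that in any machine response, the covered jobs of colors $\ne i$ occupy some set of unit slots, the covered unit jobs of color $i$ occupy the complementary slots (at most $\floor T$ minus the former count), and then the weight $i$ extracts is at most $W_{m}(i)$ where $m$ is the number of free slots; comparing $\sum_{j\ne i} (\text{covered weight of }j) + W_m(i)$ against $\mathrm{val}(\mathrm{OPT})=\sum_j W_{k_j}(i)$ via optimality of $(k_j)_j$ (which in particular is optimal subject to the slot budget left by whatever the other colors use) gives $W_m(i)\le W_{k_i}(i)=u_i(\sigma)$, so the deviation is not beneficial. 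The only subtlety is that jobs of color $i$ need not all fit in $\floor T$ slots, but that is irrelevant: lengths are unit, so the number of $i$-jobs the machine can cover is bounded by the number of idle-by-other-colors slots regardless, and fractional offsets $0<t<1$ never help since they only reduce the count of disjoint unit intervals available.
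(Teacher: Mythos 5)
There is a genuine gap, and it comes from a misreading of the model rather than from a fixable technical detail. You assume that the covered jobs of one color must occupy disjoint unit slots, so that ``the interval devoted to color $i$ has integer length $k_i$, the number of covered jobs of color $i$.'' But in this game the machine services an unlimited number of same-color jobs simultaneously: a single unit slot configured to color $i$ covers \emph{every} job of $S_i$ placed in that slot. Hence the true social optimum in $\G_{unit}$ is obtained by stacking all jobs of each of the $\min\{c,\floor{T}\}$ colors of largest total weight into one unit slot each, with value $\sum_{i=1}^{\min\{c,\floor{T}\}} w(S_i)$; your knapsack-type program ``maximize $\sum_i W_{k_i}(i)$ subject to $\sum_i k_i\le \floor{T}$'' does not characterize $\mathrm{OPT}$, and the profile built from its maximizer need not be socially optimal. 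Concretely, take $T=2$, $S_1$ consisting of two unit jobs of weight $10$, and $S_2$ of one unit job of weight $1$. Your program selects $k=(2,0)$; the constructed profile has value $20$ (and leaves the placement of $S_2$'s job unspecified, since color $2$ receives no block), whereas $\mathrm{val}(\mathrm{OPT})=21$, achieved by stacking both jobs of $S_1$ in $[0,1)$ and placing the job of $S_2$ in $[1,2)$. So the profile you exhibit can be an equilibrium without being optimal, and $\mathrm{PoS}(G)=1$ is not established; the deviation analysis at the end inherits the same slot-counting error (it bounds what player $i$ can extract by $W_m(i)$ for $m$ free slots, which is again false once all of $S_i$ fits in one slot).

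Once simultaneity of same-color jobs is used, the statement has a much shorter proof, which is the paper's: sort the colors by total weight $W_1\ge\dots\ge W_c$, let $k=\floor{T}$, and place all jobs of color $i$ stacked in $[i-1,i)$ for $i\le k$. The machine covers the $k$ heaviest color classes; this is socially optimal because at most $\floor{T}$ colors can be covered in any profile, and it is an NE because any non-covered player controls total weight at most $W_k$, while reconfiguring any part of a slot to serve it would cost the machine at least $W_k$. Your overall strategy (exhibit an optimal profile that is an NE) is the right one, but the construction and the optimality certificate must be redone along these lines.
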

\begin{proof}
Denote by $W_i$ the total weight of jobs in $J_i$, and assume w.l.o.g., that $W_1 \ge \dots \ge W_c$. Let $k=\floor{T}$. 
In a social optimum solution, $\sigma^*$, the $k$ most profitable colors are serviced, where w.l.o.g., all jobs of $J_i$, for $1 \le i \le k$ are serviced in $[i-1, i)$. 
The players' utility vector is $(W_1,\ldots,W_k,0,\ldots,0)$.
The total revenue is $\mathrm{val}(\sigma^*)=\sum_{i=1}^k W_i$. This schedule is an NE, since the total weight controlled by any non-covered player is at most $W_k$. Therefore, $\mathrm{PoS}(G)=1$.
\end{proof}



\begin{theorem}
\label{thm:unitPoA}
 Let $G \in \G_{unit}$ and let $k=\lfloor T \rfloor$. We have $\mathrm{PoA}(G)  \le \min\{3 - \frac{2}{k}, 3-\frac 2 c\}$. In addition, for every even $c$, there exists a game $G \in \G_{unit}$ with $\mathrm{PoA}(G)  = 3 - \frac{2}{k}= 3 -\frac 2 c$.
\end{theorem}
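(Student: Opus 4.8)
The plan is to establish the upper bound and the matching lower bound separately. For the upper bound, fix a game $G \in \G_{unit}$, let $k = \lfloor T \rfloor$, and let $\sigma$ be an NE. As in Proposition \ref{prop:unitBRD}, for every color $i$ either all jobs of $S_i$ are covered (utility $W_i$, where $W_i = w(S_i)$) or none are (utility $0$). Let $C_{in}$ be the set of covered colors and $C_{out}$ the non-covered colors in $\sigma$; similarly let $O$ be the set of colors covered in $\OPT$, with $|O| \le k$. Sort colors so that $W_1 \ge W_2 \ge \dots \ge W_c$. Since the machine can cover all of any single color, $\mathrm{val}(\sigma) \ge W_1$, and more generally $\mathrm{val}(\sigma) = \sum_{i \in C_{in}} W_i$. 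The key stability fact: if color $i \in C_{out}$, then player $i$ cannot profitably deviate by placing all its jobs in one unit slot $[t,t+1)$; by the machine's optimality this means that in \emph{every} unit window of the schedule, the weight currently covered there is at least $W_i$ (otherwise the machine would switch that window to color $i$). In particular $\mathrm{val}(\sigma) \ge k \cdot W_i$ is false in general, but a weaker windowed bound holds, which is what we exploit.

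The heart of the argument is to bound $\mathrm{val}(\OPT) = \sum_{i \in O} W_i$ in terms of $\mathrm{val}(\sigma)$. Write $\mathrm{val}(\OPT) = \sum_{i \in O \cap C_{in}} W_i + \sum_{i \in O \cap C_{out}} W_i$. The first sum is at most $\mathrm{val}(\sigma)$. For the second sum, each such color $i$ is non-covered in $\sigma$, so by stability, for the unit window that $S_i$'s jobs occupy (all in one slot), the covered weight in that slot is $\ge W_i$; partition $[0,T)$ into the (at most) $k$ unit slots used by covered colors and charge each $i \in O \cap C_{out}$ to a slot where the covered weight dominates $W_i$. A counting/averaging argument — each covered color occupies essentially one slot, and $|O| \le k$ — then yields $\sum_{i \in O \cap C_{out}} W_i \le \sum_{i \in O \cap C_{out}} (\text{covered weight in its slot})$, and because at most $|O \cap C_{in}| + |C_{out} \cap O|$ slots are involved while total covered weight is $\mathrm{val}(\sigma)$, one gets $\sum_{i \in O \cap C_{out}} W_i \le 2\,\mathrm{val}(\sigma)$ in the worst case, with the refinement to $(2 - 2/k)\mathrm{val}(\sigma)$ coming from the fact that at least one slot's covered weight already contributes to the first sum (or from $|O| \le k$ forcing a sharper average). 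Combining, $\mathrm{val}(\OPT) \le \mathrm{val}(\sigma) + (2 - 2/k)\mathrm{val}(\sigma) = (3 - 2/k)\mathrm{val}(\sigma)$; the bound $3 - 2/c$ follows by the same computation since at most $c$ colors exist, so the number of "out" colors charged is at most $c - |C_{in}| $ and a parallel averaging gives the $c$-flavored bound.

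For the lower bound, for even $c$ I would exhibit a game with $c$ colors and $T$ chosen so $k = c$ (e.g. $T = c$), where the NE covers few high-weight colors while $\OPT$ covers all of them. The standard construction: roughly half the colors ($c/2$ of them) carry one heavy job of weight slightly above $1$ plus structure forcing overlaps, and the other half carry unit-weight jobs, arranged so that in the unique NE only $c/2$ colors plus a small $\varepsilon$ are covered while $\OPT$ covers weight $\approx c - 2$ more — mirroring the $\G_{single}$ tight example of Theorem \ref{thm:PoAtight} and the $\mathrm{PoS}$ example of Theorem \ref{thm:poapos}, tuned to hit exactly $3 - 2/k = 3 - 2/c$. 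One must verify that this profile is genuinely an NE (no player — heavy or unit — can profitably re-place its jobs) and compute $\mathrm{val}(\OPT)/\mathrm{val}(\sigma)$ exactly.

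The main obstacle I anticipate is the charging/averaging step in the upper bound: making precise how $O \cap C_{out}$ colors are assigned to unit windows so that no window is overcharged by more than a factor $2$, and extracting the exact $2/k$ (respectively $2/c$) savings rather than a loose constant. This requires carefully accounting for the at-most-$k$ windows actually used by the NE schedule, the at-most-$k$ colors in $\OPT$, and the overlap $O \cap C_{in}$, and checking the boundary slot near $T$ (of length $T - k$, possibly fractional) contributes nothing problematic since unit jobs cannot fit there — which is exactly why $k = \lfloor T \rfloor$ and not $T$ appears.
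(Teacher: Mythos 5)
Your starting point for the upper bound is sound: in an NE every color is either fully covered or fully uncovered, and stability of an uncovered player $i$ forces, for every unit window $[t,t+1)\subseteq[0,T)$, that the covered weight meeting that window is at least $W_i$. But the charging step you build on this does not deliver the bound. The jobs of an uncovered color do not sit ``in one slot'' in $\sigma$ (that is only the hypothetical deviation), and charging each color of $O\cap C_{out}$ to a single slot gives nothing disjoint to sum over: if several colors are charged to the same slot you overcount, and if they go to different slots the covered weights in those slots are not disjoint pieces of $\mathrm{val}(\sigma)$ (a covered unit job meets two consecutive slots). Your fallback -- ``$\le 2\,\mathrm{val}(\sigma)$ in the worst case, refined to $(2-\frac{2}{k})\mathrm{val}(\sigma)$ because one slot already contributes to the first sum'' -- is an assertion, not an argument, and you flag it yourself as the main obstacle. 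The paper closes this differently and more simply: apply the window bound only to the heaviest uncovered color $W_\ell$, summed over disjoint windows covering $[0,T)$ (the paper uses $\lceil k/2\rceil$ windows of length $2$ and counts the jobs \emph{starting} in each; equivalently, $k$ unit windows with each covered job meeting at most two of them), which yields the uniform bound $W_\ell\le \frac{2\,\mathrm{val}(\sigma)}{k}$. Then, since every uncovered color has weight at most $W_\ell$, since $\mathrm{val}(\mathrm{OPT})\le\sum_{i=1}^{\min\{k,c\}}W_i$, and since the heaviest color is covered in the NE ($\ell>1$), at most $k-1$ (resp.\ $c-1$) of the OPT colors are uncovered, giving $\mathrm{val}(\mathrm{OPT})\le\mathrm{val}(\sigma)+(k-1)W_\ell\le\left(3-\frac{2}{k}\right)\mathrm{val}(\sigma)$, and the $c$-version analogously. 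These two ingredients -- a uniform per-color bound via disjoint windows, and a count of at most $k-1$ uncovered OPT colors -- are exactly what your averaging scheme is missing.

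The lower bound in your plan is only a promissory note, and the sketch points the wrong way: with $c/2$ covered colors of weight about $1$ and OPT gaining ``about $c-2$ more'' you land near $3-\frac{4}{c}$, not $3-\frac{2}{c}$, and no NE verification is attempted. The paper's tight example has a different structure: $T=c=k$, one color owning $k/2$ unit-weight jobs and $k-1$ singleton unit-weight colors. The $k/2$ jobs of the big color are spread so that every idle gap has length $1-\epsilon<1$ (so no unit job fits into any gap), and all other jobs are stacked on one of them; the machine then covers only the big color for value $\frac{k}{2}$, while OPT achieves $\frac{k}{2}+k-1$, giving exactly $3-\frac{2}{k}=3-\frac{2}{c}$. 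The spreading of the covered jobs so that gaps are just under unit length is the key idea your construction lacks.
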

\begin{proof}
Let $G \in \G_{unit}$. We show that $\mathrm{PoA}(G)  \le 3 - \frac{2}{k}$. Consider an NE schedule $\sigma$ of $G$. Clearly, if some job of color $i$ is covered in $\sigma$, then all jobs of color $i$ are covered, since uncovered jobs can be added to fully overlap a covered one.

Recall that the sets are sorted in non-increasing order of total weight. Let $\ell>1$ be the lowest index of a non-covered color. 
For every $0 \le t \le T-2$, jobs of total weight at least $W_{\ell}$ start processing in the interval $[t,t+2)$, as otherwise, by assigning all the jobs of $J_{\ell}$ in $[t+1,t+2)$, it would be beneficial for the machine to cover the jobs of $J_{\ell}$. 

If $k$ is even, then consider the partition of $[0,T)$ into $\frac k 2$ intervals $\{[0,2), [2,4), \ldots [k-2,T)\}$. The first $\frac k 2-1$ intervals are of length $2$, and the interval $[k-2,T)$ is of length at least $2$. We conclude that $\mathrm{val}(\sigma) \ge \frac {k } 2\cdot W_{\ell}$, that is, $W_{\ell}\le \frac {2 \cdot \mathrm{val}(\sigma)}{k}$.
If $k$ is odd, then consider the partition of $[0,T)$ into $\ceil{\frac k 2}$ intervals $\{[0,2), [2,4), \ldots [k-1,T)\}$. The first $\floor{\frac k 2}$ intervals are of length $2$ and the interval $[k-1,T)$ is of length at least $1$. No job can start its processing after time $T-1$ and complete on time. Thus, jobs of total weight at least $W_{\ell}$ start their processing during $[k-1, T-1)$, as otherwise it would be beneficial for the machine to cover the jobs of $J_{\ell}$ in the interval $[T-1,T)$. We conclude that $\mathrm{val}(\sigma) \ge \ceil{\frac{k}{2}}\cdot W_{\ell} \ge \frac{k}{2}\cdot W_{\ell}$, that is, $W_{\ell}\le \frac {2 \cdot \mathrm{val}(\sigma)}{k}$.

By the choice of $\ell$, $\mathrm{val}(\sigma) = \sum_{i=1}^{\ell-1} W_i$. 
In addition $\mathrm{val}(\sigma^*) = \sum_{i=1}^{k} W_i \le \sum_{i=1}^{\ell-1} W_i + (k-\ell+1)W_{\ell}$.
Since $\ell >1$, we have $\mathrm{val}(\sigma^*) \le \mathrm{val}(\sigma)+ \frac{2(k-1)\mathrm{val}(\sigma)}{k} = (3-\frac{2}{k})\mathrm{val}(\sigma)$. That is, $\mathrm{PoA}(G) \le 3-\frac{2}{k}$.

If $k \le c$, then $\min\{3 - \frac{2}{k}, 3-\frac 2 c\} = 3 - \frac{2}{k}$, and we are done. If $k>c$ then note that $W_{\ell}\le \frac {2 \cdot \mathrm{val}(\sigma)}{c}$. Also, $\mathrm{val}(\sigma^*) = \sum_{i=1}^{c} W_i \le \sum_{i=1}^{\ell-1} W_i + (c-\ell+1)W_{\ell}$.
Since $\ell >1$, we have $\mathrm{val}(\sigma^*) \le \mathrm{val}(\sigma)+ \frac{2(c-1)\mathrm{val}(\sigma)}{c} = (3-\frac{2}{c})\mathrm{val}(\sigma)$. That is, $\mathrm{PoA}(G) \le 3-\frac{2}{c}$.

For the lower bound, given an even number $c$, consider a game $G$ in which $c=T=k$. The set $J_1$ consists of $\frac{k}{2}$ unit-weight jobs, and for all $2 \le i \le k$, the set $J_i$ consists of a single unit-weight job.
For $1 \le j \le \frac k 2$, let $a_j=j(2-\epsilon)$. Consider a schedule in which for all $1 \le j \le \frac k 2$ there is one job of $J_1$ in the slot $[a_j-1, a_j)$, and all other jobs are in $[1-\epsilon, 2-\epsilon)$. Note that the idle intervals between the jobs have length $1-\epsilon$, and the last idle interval $[a_{k/2},T)$ has length $k\epsilon/2$. Therefore, a schedule in which only the jobs of $J_1$ are covered is an NE (we assume that the tie-breaking of the machine in $[1-\epsilon, 2-\epsilon)$ is in favor of  $1$. This assumption can be removed by increasing to $(1+\epsilon)$ the weight of the job of $J_1$ placed in this slot). The players' utility vector is $(\frac k 2,0,\ldots,0)$. The total revenue is $\frac k 2$. The social optimum for this instance has profit $\sum_{i=1}^k W_i = \frac{k}{2}+k-1 = \frac{3k}{2}-1$.  We conclude that $\mathrm{PoA}(G)=3-\frac{2}{k}$. 
\end{proof}

\section{Extension: Jobs with Release times and Due-dates} 

A natural extension to our game considers a setting in which every job $j \in \J$ is associated also with a release time, $r_j$, and a due-date, $d_j$. The player  controlling $J_i$ should place every job colored $i$ in an interval $[t, t+p_j) \subseteq I_j=[r_j,d_j)$. 
Denote by $\G_{real-time}$ the corresponding class of games.
Note that from the machine's point of view, the problem remains the same. However, for the players, this setting is computationally harder, and even a $2$-player game with unit-length unit-weight jobs, may not have an NE. 



\begin{theorem}
\label{thm:noNEnonsymm}
    There exists a game $G \in \G_{real-time} \cap {\G}_{unit}$ with $c=2$ that has no NE.
\end{theorem}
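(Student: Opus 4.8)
The plan is to construct an explicit instance of $G \in \G_{unit}$ with $c=2$, release times and due-dates, and then argue by case analysis that no strategy profile is stable. The construction should mimic the spirit of Example~\ref{ex:inexis} and Observation~\ref{ob:noNEprop}: player~$1$ wants one of its jobs to overlap a job of player~$2$ (to steal coverage), while player~$2$ wants to avoid all of player~$1$'s jobs; the release-time/due-date windows must be chosen so that player~$2$ can always dodge, yet player~$1$ can always chase. Since with a single time-window $[0,T)$ such a cyclic structure requires $S_1$ to contain jobs that cannot all be covered (hence the long job $p_1=T$ in the earlier examples), the point of the extension is that with nontrivial windows we no longer need a long job — unit-length jobs suffice, because the windows $I_j$ themselves create the needed asymmetry. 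Concretely, I would give player~$1$ two unit jobs with tight or nearly-tight windows (so player~$1$ has only a little freedom), and player~$2$ one or two unit jobs whose windows overlap player~$1$'s windows but are wide enough that player~$2$ always has an escape slot; the weights are all $1$, and $T$ is chosen just large enough to host the ``escape'' slot for player~$2$ but small enough that if player~$1$ forces an overlap, the machine strictly prefers $S_1$.

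After fixing the instance, the argument proceeds as follows. First, invoke the structure of the machine's optimal response (Theorem~\ref{thm:mach}): given any profile, the machine covers the weight-maximizing set of compatible jobs, which here reduces to a comparison between ``cover all of $S_1$'' versus ``cover all of $S_2$ plus whatever of $S_1$ is still compatible''. Second, enumerate the essentially finitely many qualitatively distinct profiles — determined by which pairs of $S_1$-jobs and $S_2$-jobs have overlapping chosen intervals. For each such profile, identify a player with a beneficial deviation: if the profile currently covers $S_2$ fully (so some $S_1$ job is uncovered), show player~$1$ can slide one of its jobs within its window to overlap an $S_2$ job, flipping the machine to cover $S_1$; if instead $S_1$ is fully covered (so $S_2$ gets $0$), show player~$2$ can slide its job(s) within their windows to an empty slot disjoint from every $S_1$ job, making the machine cover $S_2$. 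The absence of a fixed point of these deviations is exactly the non-existence of an NE.

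The main obstacle is the \emph{design} of the windows and of $T$ so that the two ``always has a move'' claims both hold simultaneously: player~$2$ must have an escape slot in \emph{every} configuration reachable after a best-response, and player~$1$ must be able to force an overlap in \emph{every} configuration where $S_2$ is currently fully covered. These pull in opposite directions (a wider escape region for player~$2$ tends to give player~$2$ a way to dodge even player~$1$'s chasing move), so the windows must be calibrated carefully, likely using a small $\epsilon$ to break ties in the machine's weight comparison or to make the escape slot just barely fit, as in the earlier examples. I expect to present the instance with a figure analogous to Figure~\ref{fig:NoNE2}, verify the two movement claims by checking the geometry of the intervals, and then close with the cyclic-deviation argument; the case analysis itself is routine once the instance is right.
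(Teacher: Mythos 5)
Your proposal correctly identifies the high-level mechanism (a chase/escape cycle in which the release-time/due-date windows replace the long job of Example~\ref{ex:inexis}), but it stops exactly where the proof actually lives: no concrete instance is given, and you yourself flag the ``calibration'' of the windows as the main obstacle and defer it. That calibration is not routine. In particular, the shape you sketch --- player~$1$ with two unit jobs in (nearly) tight windows and player~$2$ with one or two unit jobs, all weights~$1$ --- is unlikely to be salvageable: if player~$1$'s jobs are fully pinned by their windows, player~$1$ cannot chase, and the profile in which player~$2$ best-responds to the fixed placement is an NE; if instead there is enough slack and so few jobs, the machine can typically cover everything (or player~$1$'s utility does not increase by blocking, since with unit lengths all same-color jobs stacked in one slot are covered together), and again an NE exists. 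For player~$1$ to strictly gain by blocking, the machine must be unable to cover all jobs and must switch from covering only part of $S_1$ plus $S_2$ to covering more of $S_1$, which forces a more congested construction than the one you describe.

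The paper resolves this with a counting construction rather than an $\epsilon$-calibrated geometric one: $T=3$, seven unit-length unit-weight jobs for player~$1$ (two forced into $[0,1)$, two into $[2,3)$, three free) and five for player~$2$ (one forced into each of $[0,1)$ and $[2,3)$, three free). Writing $x_i,y_i$ for the number of jobs of $S_i$ meeting $[0,1)$ and $[2,3)$, the constraints $x_1+y_1\le 7$, $x_1,y_1\ge 2$ give $\min\{x_1,y_1\}\le 3$, so if player~$2$'s utility is below $4$ it can pile its flexible jobs into the lighter window and reach $4$; and if player~$2$'s utility is at least $4$, the machine must serve color~$2$ in an interval containing $[0,1)$ or $[2,3)$, whereupon either player~$2$ can still improve by consolidating there, or player~$1$ moves its three flexible jobs there and wins all $7$. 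No profile survives both cases. To complete your proof you would need to supply an instance with this kind of quantitative structure (multiple stacked jobs per color and forced occupancy of the two end slots) and verify the deviation claims by such a counting argument; the two-or-three-job instance you propose does not yield the required instability.
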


\begin{proof}
Let $G$ be the following game with $T=3$, $|J_1|=7$, and $|J_2|=5$. The set $J_1$ consists of seven unit-length unit-weight jobs. The first two are restricted to go to $[0,1)$, the next two are restricted to go to $[2,3)$, and the remaining three can be placed anywhere in $[0,3)$. The set $J_2$ consists of five unit-length unit-weight jobs. The first is restricted to go to $[0,1)$, the second is restricted to go to  $[2,3)$, and the remaining three can be placed anywhere in $[0,3)$. Consider a profile $\sigma$. 
For $i=1,2$, denote by $x_i$ and $y_i$ the number of jobs of $J_i$ that intersects with 
$[0,1)$ and $[2,3)$, respectively.
Note that a unit-job cannot intersect with both $[0,1)$ and $[2,3)$, thus $x_1+y_1 \le 7$ and $x_2+y_2 \le 5$. In addition, the restricted jobs imply that $x_1 \ge 2, y_1 \ge 2, x_2 \ge 1$ and $y_2 \ge 1$. 

Assume that the utility of player $2$ in $\sigma$ is less than $4$. Since $\min\{x_1,y_1\} \le 3$, player $1$ has at most $3$ jobs intersecting at least one of $[0,1)$ or $[2,3)$. Player $2$ has a beneficial migration - by placing all its flexible jobs in the corresponding interval, it increases its utility to $4$.  Thus, $\sigma$ is not an NE. 

Assume now that the utility of player $2$ in $\sigma$ is at least $4$. Since only three jobs of $J_2$ can be covered in $[1,2)$, the machine must processes $J_2$ also in an interval $I$ that contains $[0,1)$ or $[2,3)$. Now, either  $2$ has a beneficial migration - by moving all its flexible jobs to $I$, or, if all the flexible jobs of $J_2$ are already in $I$, player $1$ can increase its utility to $7$ by placing all its flexible jobs in $I$. Thus, $\sigma$ is not an NE.
\end{proof}



\begin{theorem}
\label{thm:brdNonsymm}
Computing a best-response of a player is strongly NP-hard even for $G \in \G_{real-time} \cap {\G}_{unit}$ with $c=2$.
\end{theorem}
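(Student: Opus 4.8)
The plan is to reduce from a suitable NP-hard partition-type problem, mirroring the structure of the proof of Theorem~\ref{thm:BRD2hard} but now working inside the non-symmetric, unit-length setting where each job carries a release time and due date. First I would fix a target player, say player~$1$, whose best-response we wish to compute, and design an instance in which the positions of player~$2$'s jobs are frozen so that player~$1$'s decision problem becomes: choose processing slots (within the prescribed windows $[r_j,d_j)$) for its flexible unit jobs so as to make the machine cover as much of $S_1$'s weight as possible. The key modelling idea is that, since jobs are unit-length and unit-weight, ``how much player~$1$ gains'' is governed purely by \emph{counting} how many of player~$1$'s jobs overlap each unit time-window versus how many of player~$2$'s jobs do; the machine, solving its own optimization (Theorem~\ref{thm:mach}), will serve whichever color dominates in a given region. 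Thus by arranging player~$2$'s fixed jobs and the release-time/due-date windows appropriately, I can force player~$1$ to have to ``balance'' a set of items across two (or a constant number of) critical windows, which is exactly a Partition/Subset-Sum decision.

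Concretely, I would take an instance of \textsc{Partition} with integers $a_1,\dots,a_n$, $\sum_i a_i=2B$, set $T=3$, and create two critical unit-windows $[0,1)$ and $[2,3)$, with a ``neutral'' buffer window $[1,2)$. Player~$2$ is given enough fixed jobs in $[0,1)$ and in $[2,3)$ (and flexible jobs parked in $[1,2)$) so that in each critical window the machine serves $S_2$ unless player~$1$ places strictly more than a prescribed threshold there. Player~$1$ gets $n$ flexible jobs, where the $i$-th is constrained by its window so that it may only land in $[0,1)$ or in $[2,3)$, together with a few rigid jobs pinned to $[1,2)$ to anchor the construction (so that the buffer is always $S_1$'s and cannot be repurposed). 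To make a single job correspond to ``$a_i$ units'' despite unit weights, I would instead give player~$1$ a \emph{group} of $a_i$ identical flexible jobs all sharing the same window, forced to move together in practice because splitting a group across the two windows is never helpful; the total number of jobs stays polynomial since $\sum a_i=2B$ is polynomially bounded under the standard \textsc{Partition} encoding (or I switch to \textsc{3-Partition} / unary \textsc{Partition}, which is NP-hard, to be safe). Then player~$1$'s best-response yields utility equal to ``all $2B$ flexible jobs covered plus the anchors'' if and only if the groups can be split into two halves each summing to $B$, i.e., iff the \textsc{Partition} instance is a yes-instance; otherwise one of the two critical windows is dominated by $S_2$ and player~$1$ loses the $B$-or-more jobs there.

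The steps, in order, are: (i) state the reduction instance precisely — the windows, the pinned jobs of both players, the flexible groups of player~$1$, and the frozen profile $\sigma_{-1}$ of player~$2$; (ii) compute, for an arbitrary response $\sigma_1$ of player~$1$, exactly which jobs the machine covers, using the fact that on each unit window the optimal machine configuration (per Theorem~\ref{thm:mach} / Claim~\ref{cl:single}) picks the heavier color and that the windows interact only through the global length budget $T=3$; (iii) show the ``yes'' direction: a valid partition gives a response achieving the maximal conceivable utility $U^\star$ for player~$1$; (iv) show the ``no'' direction: if no partition exists, then in any response at least one critical window has $S_1$-weight below the threshold, so the machine serves $S_2$ there and player~$1$'s utility is strictly below $U^\star$; (v) conclude that deciding whether player~$1$ has a response with utility $\ge U^\star$ is NP-hard, hence best-response computation is NP-hard, and note $c=2$ and all jobs unit-length unit-weight, so $G\in\mathcal{G}_{unit}$.

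The main obstacle I anticipate is step (ii): carefully controlling the machine's optimal response. Because the machine can configure color arbitrarily over $[0,T)$ and is not restricted to the ``natural'' unit windows, I must ensure the gadget forces the machine's optimum to respect the intended window structure — e.g., that it never benefits the machine to serve $S_1$ over a fractionally-shifted interval straddling the buffer, or to interleave colors in a way that rescues player~$1$ in a ``no'' instance. The standard fix is to make the anchor/pinned jobs and the weights robust with a small $\epsilon$-perturbation (as in Theorems~\ref{thm:NPHdecide} and~\ref{thm:BRD2hard}) so that the heavier color in each region is strictly, and by a controllable margin, determined by the integer counts, leaving the machine's optimization no profitable ``diagonal'' move. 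Verifying this strict-domination invariant across all of the machine's options — and confirming that no subtle shift of player~$1$'s groups off the integer grid helps — is the delicate part of the argument; the rest is bookkeeping analogous to the earlier Partition reductions in the paper.
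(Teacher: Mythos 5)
Your overall architecture (freeze player~$2$ in the two critical slots $[0,1)$ and $[2,3)$, pin a couple of player-$1$ jobs there, and let the flexible jobs encode a Partition instance so that the best response attains the maximal utility iff a balanced split exists) is exactly the paper's construction. But there is a genuine gap in how you encode the numbers $a_i$. You impose unit \emph{weights} on yourself, even though $\G_{unit}$ only requires unit \emph{lengths} ($p_j=1$); weights are unrestricted. To compensate you replace each $a_i$ by a group of $a_i$ identical unit-weight jobs, and claim the instance stays polynomial because ``$\sum a_i=2B$ is polynomially bounded under the standard \textsc{Partition} encoding.'' That is false: in the standard (binary) encoding $B$ can be exponential in the input size, so the reduction is only pseudo-polynomial and does not establish NP-hardness. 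Your fallback is also flawed: \textsc{Partition} with unary-encoded numbers is solvable in polynomial time (the usual pseudo-polynomial dynamic program), so ``unary \textsc{Partition}'' is not NP-hard; and switching to \textsc{3-Partition} would require a genuinely different gadget with $m$ critical windows (and a correspondingly redesigned frozen profile for player~$2$), which you do not construct.

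The paper sidesteps all of this by keeping unit lengths but letting the $i$-th flexible job of player~$1$ carry weight $w_i=a_i$ with window $[0,3)$, pinning two weight-$1$ jobs of $S_1$ to $[0,1)$ and $[2,3)$, and giving player~$2$ two weight-$B$ jobs pinned to those same slots. Then a unit-length job can overlap at most one critical slot, and the machine abandons player~$2$'s weight-$B$ job in a slot exactly when player~$1$'s weight overlapping that slot exceeds $B$; since the total flexible weight is $2B$, both slots can be won simultaneously iff a partition exists. If you adopt weighted unit-length jobs (which is all the statement requires), your steps (i)--(v) go through essentially as in the paper; your concern in step (ii) about off-grid placements and the machine's freedom to reconfigure is legitimate and worth a sentence or two, but it does not require the $\epsilon$-perturbation machinery once the weights are chosen as above.
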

\begin{proof}
We present a reduction from the $3$-Partition problem.  Given a set $A$ of $3k$ integers $\{a_1, \dots, a_{3k}\}$ and an integer $B$, such that $\sum a_i = kB$, and for all $1 \le i \le 3k$, $B/4 < a_i < B/2$, construct the following game with $c=2$. Let $T=2k-1$. 
For every $1 \le \ell \le k$, let $[2(\ell-1), 2\ell-1)$ be denoted {\em the $\ell$-th odd interval}.
That is $[0,1)$ is the first odd interval, $[2,3)$ is the second odd interval, and so on, until $[2k-2, 2k-1)$ which is the $k$-th odd interval.

Player $1$ controls a set $J_1$ of $4k$ jobs. For $0 \le i \le 3k$, $p_i=1$ and $w_i=a_i$. For each of these jobs, $r_i=0$ and $d_i=2k-1$, that is, they can be processed anywhere in $[0,T)$. For $3k+1 \le i \le 4k$, let $\ell_i=i-3k$. Job $3k+\ell_i$ has $p_i=1, w_i=1/2$ and is restricted to be processed exactly in the $\ell$-th odd interval.

Player $2$ controls $k$ unit-length jobs of weight $B$.
For $1 \le \ell \le k$, job $\ell$ in $J_2$ is restricted to be processed exactly in the $\ell$-th odd interval.

We argue that player $1$ has a best-response that leads to utility $kB+k/2$ if and only if a $3$-partition of $A$ exists.

Observe that a utility of $kB+k/2$ for player $1$ implies that all jobs of player $1$ are covered. Since jobs $3k+1,\ldots,4k$ in $J_1$ are restricted to go to the same slots as the jobs of $J_2$, and since player $1$ cannot place the partition jobs such that they overlap more than a single odd interval, the machine will process all the jobs of $J_1$ only if player $1$ places them such that the load of its jobs overlapping each of  the odd intervals is more than $B$. Since we need exactly thee jobs on each odd interval to induce a load of more than $B$, this implies the existence of a $3$-partition.  
\end{proof}

\section{Conclusion and Directions for Future Work}
In this paper, we introduced and analyzed interval scheduling games with color-based concurrent job, modeling a strategic variant of the classical interval scheduling problem where each player controls a class of same-colored jobs. Driven by applications like wireless beamforming, players strategically place their jobs to maximize their own covered weight, while the machine dynamically optimizes global coverage.

Our work shows a significant difference between single-job  and multi-job classes. We proved that every single-job game in has a pure Nash equilibrium, the Price of Stability is $1$, and best-response dynamics always converge. We provided a polynomial-time algorithm to compute an equilibrium and established a tight linear bound of $(n-1)/2$ on the Price of Anarchy, which drops to a constant bound of $3$ under weight-proportional lengths.  On the other hand, when players control classes of multiple jobs, stability collapses. Pure Nash equilibria may not exist even under weight-proportional lengths, and both best-response computation and deciding equilibrium existence become strongly NP-hard. 
For games with unit-length jobs we recovered positive results, showing equilibrium existence, $PoS=1$, and a tight constant price of anarchy strictly less than $3$. However, adding release times and due dates disrupts this setting, making the game unstable and best-response computation NP-hard.  

Our work leaves open several avenues for future research. Some open questions follow directly from our work. For example, generalizing the classes of games for which an NE is guaranteed to exist, the tightness of the result in Theorem \ref{th:w=p}, and extending the results for games with release times and due-dates.

For single-job games ($\G_{single}$), an optimal Knapsack solution directly induces an NE. A natural question is whether standard approximation algorithms for Knapsack can be used to compute highly efficient states in polynomial time. While scheduling an approximate packing solution does not inherently yield a stable profile, future work could explore whether a stability can be restored while preserving strong social welfare guarantees.


A different direction is to study other machine environments. Our current work assumes that there is only one machine, what can we say if there are multiple (parallel) machines? Or what if the machine cannot schedule all jobs from a particular color simultaneously, but has a fixed capacity or incurs delays if multiple jobs are processed simultaneously?

\bibliographystyle{plain}


\end{document}